\documentclass[jair,twoside,11pt,theapa]{article}
\usepackage{jair_arxiv, theapa, rawfonts}

\jairheading{}{}{}{}{ }
\ShortHeadings{Price of Anarchy in Auctions}
{Roughgarden, Syrgkanis, Tardos}
\firstpageno{1}

\usepackage[suppress]{color-edits}
\usepackage{epsfig}
\addauthor{et}{green}
\addauthor{te}{blue}
\addauthor{vs}{red}
\addauthor{tr}{yellow}

\usepackage{definitions}

\begin{document}

\title{The Price of Anarchy in Auctions}

\author{\name Tim Roughgarden \email tim@cs.stanford.edu \\
       \addr Computer Science Department, Stanford University,\\
       Stanford, CA  USA
       \AND
       \name Vasilis Syrgkanis \email vasy@microsoft.com \\
       \addr Microsoft Research,
       1 Memorial Drive,\\
       Cambridge, MA USA
       \AND
       \name \'Eva Tardos \email eva@cs.cornell.edu \\
       \addr Computer Science Department, Cornell University,\\
       Ithaca, NY USA}


\maketitle

\begin{abstract}

This survey outlines a general and modular theory for proving
approximation guarantees for equilibria of auctions in complex
settings.  This theory complements traditional economic techniques,
which generally focus on exact and optimal solutions and are
accordingly limited to relatively stylized settings.

We highlight three user-friendly analytical tools: smoothness-type
inequalities, which \teedit{immediately} yield approximation guarantees for many auction
formats of interest in the special case of complete information and
deterministic strategies; extension theorems, which extend such
guarantees to randomized strategies\teedit{, no-regret learning outcomes,} and incomplete-information
settings; and
composition theorems, which extend such guarantees from simpler to
more complex auctions.  Combining these tools yields tight
worst-case
approximation guarantees for the equilibria of many widely-used
auction formats.

\end{abstract}

\section{Introduction}
\label{sec:introduction}

Many modern applications in computer science involve a number of
self-interested participants, with objectives different from each
other and from the application designer.
{\em Auctions} are a canonical genre of such applications, ranging
from the sale of antiques on eBay, to real-time and targeted
Internet advertising, to the allocation of licenses for the wireless
spectrum that constitutes much of the modern communication
infrastructure.

Auctions have been studied by economists for over a half-century.  But
over the last several years, work in computer science has offered a
fresh and relevant perspective.
This new theory is the subject of this survey and it
concerns {\em approximation guarantees} for the equilibrium performance
of auctions, also known as ``price of anarchy'' bounds~\cite{Koutsoupias1999}.
There are a number of fundamental models that appear impossible to
reason about without resorting to approximation.%
\footnote{Computer science
  has also brought other important ideas to the table in auction
  design, including an emphasis on reasonable computational complexity
  and robustness to informational assumptions.  These topics are
  outside the scope of this survey.}

For example, consider a seller with $m$ different items for sale.
There are $n$ bidders, and the seller does not know what the bidders
want.  What should the seller do?  One idea is to simply ask the
bidders what they want, meaning ask them to bid on each of the
possible subsets of items that they might get.  With a single item
($m=1$), this is a practical idea, and is basically what happens in an
eBay auction.  In general, however, this idea requires soliciting
$2^m$ bids from each bidder (one per subset of the items), which is a
non-starter unless $m$ is tiny.

So how are multiple items auctioned off in practice?  One of the
most common methods is to sell each item separately.  That is, each
bidder submits one bid for each item ($m$ bids in all), and each item
is awarded to the highest bidder on that item (e.g., for a price equal
to the highest or second-highest bid).  How good is this simple
method --- do bidders bid in a way that the items are allocated to those
who want them the most?

It has been known for many decades that simple auction formats do not
generally result in the most efficient allocation of the items for
sale, but conventional wisdom in economics states that the allocation
should be ``pretty good'' provided bidders' preferences over items are
``sufficiently nice'' (see e.g.~\citeA{milgrombook}).
Traditional economic tools appear inadequate
for translating this empirical rule of thumb into a rigorous
performance guarantee, for two reasons: (1) work in economics has
focused on exact and optimal solutions, and for the most
part has not considered approximation guarantees; (2) economic systems
are traditionally studied by solving for and then analyzing the
equilibria, while the equilibria of multi-item auctions are far too
complex to characterize.

This survey outlines a fairly general and modular theory for proving
rigorous performance guarantees for equilibria of auctions in complex settings.  A
representative consequence of the theory is: if multiple items are
sold separately using first-price auctions and the willingness to pay
of each bidder is a submodular function of the items that she receives
(i.e., preferences are ``sufficiently nice''), then every equilibrium
of the auction achieves social welfare
at least 63\% of the maximum possible (i.e., is
``pretty good'').  Despite the complexity of the welfare-maximization
problem and of the equilibria in such multi-item auctions,
this
guarantee follows from a user-friendly three-step recipe.
Roughly, the first step is to consider only the very special case of a
single item, with every bidder knowing the willingness to pay of every
other bidder, and pure (i.e., deterministic) equilibria.
The task in this special case is to use elementary arguments to
translate the defining conditions of an equilibrium into a particular
type of approximate welfare guarantee (a ``smoothness-type
inequality'').  The second step is to apply an ``extension theorem,''
which extends the approximate welfare guarantee to single-item
auctions in which there is uncertainty (i.e., randomness) both in what
bidders are willing to pay and in what bids they submit\teedit{, or 
to outcomes reached by players using no-regret learning, which results in a 
form of correlation in this randomness}.  The third
step is to apply a ``composition theorem,'' which extends the
approximate welfare guarantee
to equilibria
of simultaneous 
auctions with any number of items.

\paragraph{Organization.}
Section~\ref{sec:fpa} explores a simple but non-trivial example: the
worst-case inefficiency of equilibria in first-price single-item
auctions.  In addition to introducing Bayes-Nash equilibrium
analysis in a concrete and understandable setting,
the efficiency analysis in this section already introduces the
essence of many of the
key ideas of the general framework.  Section~\ref{sec:sim-fpa}
considers a more complex example, the sale of multiple items via
simultaneous single-item auctions.  After absorbing the analyses of
these two examples, the general theory for
approximate efficiency guarantees for
smooth auctions via ``extension theorems'' described in Section~\ref{sec:model} follows
naturally.  Section~\ref{sec:no-regret} explains why guarantees for
smooth auctions apply even when players fail to converge to an
equilibrium, provided each achieves a ``no-regret'' property over
repeated plays of the auction.  Section~\ref{sec:composability}
shows that the extension from single-item to simultaneous single-item
auctions (Sections~\ref{sec:fpa} and~\ref{sec:sim-fpa}) is a general
phenomenon, by proving a ``composition theorem'' for smooth
auctions.  Section~\ref{sec:lower_bounds} considers the limitations of
simple auctions, and explains why they cannot enjoy good
price-of-anarchy bounds with general bidder preferences.
Section~\ref{sec:other-topics} offers pointers to the literature on
related topics not covered in this survey, and Section~\ref{sec:open}
concludes with a number of open research directions.



\section{A Simple Example: First-Price Single-Item Auctions}
\label{sec:fpa}

\subsection{First-Price Auctions and Bayes-Nash Equilibria}


We begin our analysis of the equilibria of auctions with a simple but
fundamental non-truthful auction, the first-price single-item auction.
Consider a single item being auctioned off to one of $n$ bidders (also
called ``players''). Each
bidder $i$ has some value $\vali$ for winning the item --- the maximum
``willingness to pay'' of the bidder --- which is private information
known only to her.  If she wins the item and is asked to pay a price
$p_i$, then her payoff is $\vali-p_i$.
We refer to such payoffs by saying that players have quasi-linear
preferences with respect to money.\footnote{Non-quasi-linear utility
  functions, such as those incorporating risk aversion or budgets, are
  also interesting.  See also 
  Sections~\ref{sec:other-topics} and~\ref{sec:open}.}

In a sealed-bid first-price auction, every player~$i$ simultaneously
submits a bid $\bidi$ to the auctioneer. The player with the highest
bid wins the item and is asked to pay her bid. Ties are broken
arbitrarily.


Bidding in a first-price auction is tricky.\footnote{By contrast, in a
  {\em second-price auction}, where the winning bidder pays the value
  of the second-highest bid, it is a weakly dominant strategy to bid
  to bid one's true value.}
Certainly no player will bid her true valuation, as this would
guarantee zero payoff.  Instead, bidders ``shade'' their bids, meaning
bid less that their values.  By how much should a bidder shade her
bid?  The answer depends on the amount of competition she faces and on
how other players behave.
How can a bidder reason about what others will do when their
valuations are unknown to her?  The standard approach to modeling this
issue is via (Bayesian) {\em games of incomplete information}.
We assume that the valuation $\vali$ of each player $i$ is drawn
independently from
some distribution $\F_i$, and that these distributions are common
knowledge to all of the players.
Intuitively, these distributions correspond to the common
beliefs that players have about everyone's valuations.


In this incomplete-information model, a {\em strategy} of a player
is a function $\sdi$ that maps a value $\vali$ in the support of
$\F_i$ to a bid $\sdi(\vali)$.  The semantics are: ``when my valuation
is $\vali$, I will bid $\sdi(\vali)$.''
The central equilibrium concept in Bayesian games is the
\emph{Bayes-Nash equilibrium}.  By definition, a profile of strategies
constitutes a Bayes-Nash equilibrium if for every player~$i$ and every
valuation~$\vali$ that the player might have, the player chooses a
bid~$\sdi(\vali)$ that maximizes her conditional expected utility.
The expectation is over the valuations of other players, conditioned
on bidder~$i$'s valuation being~$\vali$.

\subsection{Symmetric Valuation Distributions}\label{ss:sym}


What do Bayes-Nash equilibria look like in a single-item first-price
auction?  To get a feel for this question, we begin with a simple
example, of two bidders with valuations drawn independently and
identically from the uniform distribution on $[0,1]$.

\begin{example}{Two bidders with uniform $[0,1]$ valuations}\label{ex:simple}
%
Let's ``guess and check'' a Bayes-Nash equilibrium for this example.
First, since the setting is symmetric in the two bidders, it is
natural to guess that Bayes-Nash equilibria are also symmetric, meaning
that the two players use the same strategy $\sd(\val)$.
Let's also guess that the function $\sd(\cdot)$ is strictly
increasing, continuous, and differentiable. Under these assumptions,
the highest bidder is the bidder with the largest valuation.  By
symmetry, the probability that a bidder with valuation~$\val$ wins is
$\F(\val) = \val$.  To check the Bayes-Nash equilibrium conditions,
fix a player and condition on her valuation being~$\val$.  We need to
solve for the bid that maximizes the expected utility of the bidder.
\etedit{She could pretend to have value $z$ and bid $\sd(z)$ for $z \in [0,1]$.
Her expected utility for such a bid} is
\begin{equation*}
g(z) =\underbrace{(\val- \sd(z))}_{\text{utility of win}}\cdot
\underbrace{\F(z)}_{\text{prob.\ of win}}  = (\val-\sd(z))\cdot z.
\end{equation*}
To force the condition that the optimal bid of the form~$\sd(z)$ is
$\sd(\val)$, as prescribed by the Bayes-Nash equilibrium conditions,
we differentiate~$g$ with respect to~$z$ and set~$s(\cdot)$ to force a
zero derivative at~$\val$.  This
yields the condition
\begin{equation*}
0 = g(z)'|_{z=v}=v-(s(z)z)'|_{z=v} = \val-(\val\cdot \sd(\val))'  \Leftrightarrow \val \cdot \sd(\val) = \frac{\val^2}{2}+\text{constant}
\end{equation*}
on the function~$\sd(\cdot)$.
Setting $\sd(0)=0$, we obtain the solution $\sd(\val) =
\frac{\val}{2}$. This solution does indeed satisfy our initial
assumptions of differentiability and monotonicity.  It is also easy to
check that it satisfies the Bayes-Nash equilibrium conditions for all
bids, and not just for bids of the form~$\sd(z)$ for some~$z$.

This Bayes-Nash equilibrium turns out to be the unique equilibrium in
this example.  This equilibrium is fully efficient, in that the item
is always allocated to the bidder with the higher valuation.
\end{example}

The argument in Example~\ref{ex:simple} generalizes to arbitrary
settings in which players' valuations are drawn independently and
identically from a distribution $\F$.  Bayes-Nash equilibria also
continue to be unique in this case~\cite{Chawla2013}.
We refer to such settings as
\emph{symmetric first-price auctions}.  For example, with $n$ bidders
with valuations drawn from the uniform distribution on $[0,1]$, every
player uses the strategy $\sd(\val)=\frac{n-1}{n}\val$ in the
Bayes-Nash equilibrium.  Thus, as competition increases, bidders shade
their bids less at equilibrium.

\trdelete{
of what is called the \emph{revenue equivalence} principle
\cite{Myerson1981}. Revenue equivalence states that if two auctions
always have the same winner than the player's expected payment must be
the same also. In a symmetric equilibrium of a single-item auction
with monotone bidding function, the player with highest value
wins. This is the same as the outcome of the second price auction, and
hence  by revenue equivalence, the bid of a player with value $\val$
in a symmetric first-price auction equals \etedit{her} expected
payment in a second price auction, \etedit{the} highest value among
her opponents, conditional each one of the opponents having a
valuation smaller than $v$.
}


Summarizing, there are two key take-aways about symmetric first-price
auctions.
\begin{enumerate}

\item Bayes-Nash equilibria are relatively well understood.

\item Bayes-Nash equilibria are fully efficient, with the item always
  allocated to the bidder with the highest valuation.

\end{enumerate}

\subsection{Asymmetric Valuation Distributions}


When
there is information that distinguishes different bidders,
for example the market shares of different companies, the symmetry
assumption of Section~\ref{ss:sym} is no longer appropriate.
Can we extend the results of that section to {\em
  asymmetric} first-price single-item auctions
\cite{Maskin2000},
where bidders valuations' are drawn from different distributions?


This question has been extensively studied; see Section 4.3 of
\citeA{Krishna2002}.
Because solving for a Bayes-Nash equilibrium in the asymmetric case is
a daunting task and generally admits no closed-form solution,
most papers in the area have considered only the case
of two bidders and specific parametric distributions
\cite{Vickrey1961,Kaplan2012}.
Already with two bidders with valuations drawn uniformly from $[0,1]$
and $[0,2]$, things get complicated.

\begin{example}{Two bidders with uniform $[0,1]$ and uniform $[0,2]$
    distributions \cite{Vickrey1961}} \label{ex:vickrey}
One can verify that the following bidding functions constitute an equilibrium in this example (see also \citeA{Krishna2002}):
\begin{align*}
\sd_1(\val_1) =~& \frac{4}{3 \val_1} \left(1-\sqrt{1-\frac{3 \val_1^2}{4}}\right)\\
\sd_2(\val_2) =~& \frac{4}{3 \val_2} \left(\sqrt{1+\frac{3 \val_2^2}{4}}-1\right).
\end{align*}
%
Both bidders bid in the range $[0,\frac{2}{3}]$, with the
weaker bidder $1$ bidding more aggressively than the stronger bidder
$2$ (i.e., $\sd_1(\val) > \sd_2(\val)$ for $\val \in [0,1]$).
For intuition, recall from Example~\ref{ex:simple} that if the bidders
had uniformly and identically distributed valuations, then at
equilibrium both bid half their value.  Recall also that the
equilibrium bid of a player increases with the amount of competition
she faces.
In this example, from the first bidder's perspective, the other bidder
represents stiffer competition than an identically distributed bidder,
so she bids more aggressively than in the symmetric case.
The opposite reasoning applies to the second bidder, who bids
less aggressively than in the symmetric case.
For this reason, the Bayes-Nash equilibrium is not fully efficient ---
there are valuation profiles in which the bidder with the lower
valuation is the higher bidder and hence the winner.
\end{example}



Summarizing, even the simplest asymmetric first-price auctions are
less well-behaved than symmetric ones, in two senses.
\begin{enumerate}

\item
Solving for a Bayes-Nash equilibrium requires finding a solution to a
system of partial differential equations, which in most cases has no
closed-form solution.

\item Bayes-Nash equilibria are generally inefficient, in that
the bidder with the highest valuation is not always the winner.

\end{enumerate}


How inefficient can the Bayes-Nash equilibria of auctions be?
The goal of this survey is to explain a number of general tools that
have proved useful for answering this question, along with several
representative applications.
Specifically, for every auction format that we consider,
we aim to show that the {\em price of anarchy}
--- the smallest ratio between the expected welfare of a Bayes-Nash
equilibrium and the expected welfare of a welfare-maximizing
allocation --- is at least some constant, independent of the
parameters of the auctions (the number of bidders, the valuation
distributions, etc.).

\subsection{The Price of Anarchy of First-Price Auctions}


The previous section demonstrates the futility of trying to
characterize the Bayes-Nash equilibria of asymmetric first-price
auctions in order to bound their inefficiency.
Instead, our analysis will rely only on the fact that, in
a Bayes-Nash equilibrium, every player is best responding to her
opponents' strategies.
We then show that every strategy profile that satisfies this best
response property is approximately efficient.
We prove that, in every (asymmetric) first-price auction, every
Bayes-Nash equilibrium has expected welfare at
least $\approx 63\%$ of the maximum possible.
Thus non-trivial efficiency equilibrium guarantees do not require a
detailed understanding of the structure of equilibria.


We use the following notation.  For a bid profile
$\bids=(\bid_1,\ldots,\bid_n)$, $\xsi(\bids)$ denotes whether or not
bidder~$i$ is the winner (1~or~0, respectively).
We denote by
$p(\bids)=\max_{i\in \{1,\ldots,n\}}\bid_i$ the selling price, which
is the highest bid.
We use $u_i(\bids;\vali)$ to denote the utility of player~$i$ when
her valuation is $\vali$ and the bid profile is $\bids$.  Because the
auction is first price, we can write
\begin{equation}\label{eq:qlsi}
u_i(\bids;\vali) = \left(\vali - \bidi\right)\cdot \xsi(\bids).
\end{equation}


Consider a strategy profile $\sds= (\sd_1,\ldots,\sd_n)$,
where each strategy $s_i$ is a function from the player's valuation to her
bid. We use $\sds(\vals)$ to denote the strategy vector resulting from
the vector of valuations $\vals$.
For a vector $\mathbf{w}$, we use $\mathbf{w}_{-i}$ to denote the
vector $\mathbf{w}$ with the $i$th component removed.  For example,
$\sdsmi(\valsmi)$ is the vector of bids of the players other than~$i$
when the valuations of these players are $\valsmi$.
With this notation, a strategy profile $\sds= (\sd_1,\ldots,\sd_n)$ is
a  Bayes-Nash equilibrium if and only if
\begin{equation}
 \E_{\valsmi}\left[u_i(\sds(\vals); \vali)~|~ \vali\right] \geq \E_{\valsmi}\left[u_i(\bid_i',\sdsmi(\valsmi); \vali)~|~\vali\right]
\end{equation}
for every player~$i$, every possible valuation~$\vali$ of the player,
and every possible deviating bid~$\bid_i'$.
The expectations are over the valuations of the players other
than~$i$,
according to the assumed prior distribution $\F$.

The {\em social welfare} of a bid profile $\bids$ when the valuation
profile is $\vals=(\val_1,\ldots,\val_n)$ is
\begin{equation}\label{eq:sw}
SW(\bids;\vals)=\sum_{i=1}^{n} \vali\cdot \xsi(\bids).
\end{equation}
\teedit{Note that the social welfare is the sum of the utilities of the
players plus the revenue of the auctioneer.} The maximum-possible social welfare in a single-item auction is
\begin{equation}\label{eq:opt}
\opt(\vals) = \sum_{i=1}^n \val_i \cdot \xsi^*(\vals),
\end{equation}
where $\xsi^*(\vals)$ is the indicator variable for whether or not
player $i$ is the player with the highest valuation
(with ties broken arbitrarily).


The {\em price of anarchy} of an auction, with valuation
distribution~$\F$, is the smallest value of the
ratio
\[
\frac{\E_{\vals}\left[SW(\sds(\vals); \vals)\right]}
{\E_{\vals}\left[\opt(\vals)\right]},
\]
ranging over all Bayes-Nash equilibria $\sds$ of the auction.
Thus inefficiency is measured by the extent to which the price of
anarchy is smaller than~1.
The price of anarchy of an auction format is then the worst-case
(i.e., smallest) price of anarchy of the auction in any setting,
ranging over all choices~$n$ for the number of players and all
valuation distributions~$\F$.


\begin{theorem}[Price of Anarchy of First-Price Single-Item Auctions]\label{thm:single-item-fpa}
The price of anarchy of the first-price single-item auction format is
at least $1-\tfrac{1}{e} \approx 0.63$.
\end{theorem}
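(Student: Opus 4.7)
The plan is to apply a smoothness-style argument: I will exhibit, for each player, a randomized deviation depending only on that player's own value such that at every valuation profile $\vals$ and bid profile $\bids$, the sum over $i$ of the resulting expected utilities is at least $(1-\tfrac{1}{e})\opt(\vals) - p(\bids)$. Combined with the Bayes-Nash equilibrium inequality and the identity $SW(\bids;\vals) = \sum_i u_i(\bids;\vali) + p(\bids)$, the selling-price terms cancel and the factor $1 - 1/e$ pops out.

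The key deviation is the following. For each player $i$ with value $\vali$, let $\bid_i'(\vali)$ be drawn from the density $f(b) = \tfrac{1}{\vali - b}$ on $[0,\,(1-\tfrac{1}{e})\vali]$. A direct integration, $\int_0^{(1-1/e)\vali} \tfrac{1}{\vali - b}\, db = \ln(\vali) - \ln(\vali/e) = 1$, shows this is a valid probability density. For any fixed opposing maximum bid $B = \max_{j\neq i}\bid_j \ge 0$, the expected utility from the deviation equals
\[
\int_{B}^{(1-1/e)\vali} (\vali - b)\cdot \tfrac{1}{\vali-b}\, db \;=\; \max\!\bigl\{0,\,(1-\tfrac{1}{e})\vali - B\bigr\} \;\geq\; (1-\tfrac{1}{e})\vali - B.
\]
The multiplicative factor $(\vali - b)$ collapses inside the integral --- this cancellation is what dictates the unusual choice of density.

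Now fix any $\vals$ and $\bids$ and let $i^*$ be the player with the largest value $v^* = \vali^* = \opt(\vals)$. Applied to $i^*$ with threshold $\max_{j\neq i^*}\bid_j \le p(\bids)$, the bound above gives $\E[u_{i^*}(\bid_{i^*}'(v^*), \bids_{-i^*}; v^*)] \geq (1-\tfrac{1}{e})\opt(\vals) - p(\bids)$. For every other $i$, the deviation bid lies in $[0,\vali)$ and therefore contributes nonnegative expected utility, so summing over $i$ yields the pointwise smoothness inequality
\[
\sum_{i=1}^{n} \E\!\left[u_i(\bid_i'(\vali),\,\bids_{-i};\,\vali)\right] \;\geq\; (1-\tfrac{1}{e})\opt(\vals) - p(\bids),
\]
valid for every $\vals$ and every $\bids$.

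To pass to Bayes-Nash, observe that since $\bid_i'(\vali)$ depends only on $\vali$, the equilibrium inequality gives $\E_{\valsmi}[u_i(\sds(\vals);\vali)\mid \vali] \geq \E_{\valsmi,\bid_i'}[u_i(\bid_i'(\vali),\sdsmi(\valsmi);\vali)\mid \vali]$ for every $i$ and $\vali$. Taking expectation over $\vali$, summing over $i$, and invoking the smoothness inequality pointwise at $\bids = \sds(\vals)$ yields $\E[\sum_i u_i(\sds(\vals);\vali)] \geq (1-\tfrac{1}{e})\E[\opt(\vals)] - \E[p(\sds(\vals))]$. Substituting this into the decomposition $\E[SW(\sds(\vals);\vals)] = \E[\sum_i u_i(\sds(\vals);\vali)] + \E[p(\sds(\vals))]$ cancels the selling-price terms and delivers $\E[SW]\geq(1-\tfrac{1}{e})\E[\opt]$. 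The only real obstacle is engineering the deviation in step two to simultaneously (i) depend only on the player's own value $\vali$ --- a hard constraint imposed by the Bayes-Nash setting --- and (ii) give the tight constant $1-1/e$; the density $1/(\vali-b)$ truncated exactly at $(1-1/e)\vali$ is the unique choice that makes $(\vali-b)$ cancel inside the utility integral while normalizing to unit mass at the truncation point.
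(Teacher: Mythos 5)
Your proposal is correct and follows exactly the approach the paper intends: the paper proves the weaker bound of $\tfrac{1}{2}$ with the deterministic deviation $b_i^*=v_i/2$, and relegates the $1-\tfrac{1}{e}$ bound to a footnote (``consider a randomized bid with support $[0,(1-\tfrac{1}{e})v_i]$'') with a pointer to Syrgkanis and Tardos (2013); you have carried out that optimization explicitly with the density $1/(v_i-b)$ and verified that the smoothness and extension-to-Bayes-Nash steps go through unchanged.
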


\begin{proof}
We prove a bound of $\tfrac{1}{2}$; the bound of $1-\tfrac{1}{e}$
follows from an optimized version of the following argument
(see \citeA{Syrgkanis2013}).

Let $\sds$ be a Bayes-Nash equilibrium.
By definition, every player~$i$ chooses a strategy that maximizes her
expected utility, given her valuation $\vali$, the \etdelete{conditional}
distribution on $\valsmi$, and the strategies $\sdsmi(\valsmi)$ used
by the other players.  In particular, if a bidder~$i$ deviates from
the bid $\sdi(\vali)$ that she uses in the equilibrium, to bidding half
her value ($b_i^*=\frac{v_i}{2}$), then her expected utility can only
go down.\footnote{To obtain the
$1-\tfrac{1}{e}$ bound, one needs to consider a randomized bid with
support $\left[0,(1-\tfrac{1}{e})\vali\right]$ in
  place of the deterministic bid $\vali/2$.}
The choice of such hypothetical deviations~$b^*_i$ will be an
important theme throughout this survey.

It is simple to bound from below the utility of a bidder that bids
half her value ($b_i^*=\frac{v_i}{2}$).
For every bid profile $\bids$, we have
\begin{equation}\label{eqn:half-value-lower}
u_i(\bidi^*,\bidsmi;\vali) \geq \frac{1}{2}\vali - p(\bids),
\end{equation}
since the bidder either wins and obtains
utility $\vali-\bid_i^* =
\tfrac{1}{2}\vali \ge \tfrac{1}{2}\vali - p(\bids)$
or loses (in which case $\tfrac{1}{2}\vali < p(\bids)$) and obtains
utility~$0 \ge \tfrac{1}{2}\vali-p(\bids)$.
Since the bid $b_i^*=\frac{v_i}{2}$ guarantees non-negative utility,
we can also write
\begin{align*}
u_i(\bidi^*,\bidsmi;\vali)\geq \left(\frac{1}{2}\cdot \vali - p(\bids)\right)\cdot \xsi^*(\vals).
\end{align*}
Summing this inequality over all bidders~$i$, we obtain
\begin{align}\label{eqn:example-smoothness}
\sum_{i=1}^{n} u_i(\bidi^*,\bidsmi;\vali) \geq \sum_{i=1}^{n}\left(\frac{1}{2}\cdot \vali - p(\bids)\right)\cdot \xsi^*(\vals) = \frac{1}{2}\opt(\vals) - p(\bids)
\end{align}
for every valuation profile $\vals$ and bid profile $\bids$,
where $\xsi^*$ is defined as in~\eqref{eq:opt}.

We now invoke the hypothesis that $\sds$ is a Bayes-Nash equilibrium:
for every player~$i$ with valuation $\vali$,
\begin{align}\label{eq:br}
\E_{\valsmi}\left[u_i(\sds(\vals);\vali)\right] \geq \E_{\valsmi}\left[u_i(\bidi^*,\sdsmi(\valsmi);\vali)\right].
\end{align}
Taking expectations over $\vali$, summing up the $n$ inequalities of
the form~\eqref{eq:br},
and combining with inequality~\eqref{eqn:example-smoothness}, we obtain
\begin{equation*}
\sum_{i=1}^{n} \E_{\vals}\left[u_i(\sds(\vals);\vali)\right]  \geq  \sum_{i=1}^{n} \E_{\vals} \left[u_i(\bidi^*,\sdsmi(\valsmi);\vali)\right]  \geq \E_{\vals}\left[\frac{1}{2}\opt(\vals) - p(\sds(\vals)) \right].
\end{equation*}

Last, observe that by the quasi-linear form of bidders' utilities, for
every bid profile $\bids$ and valuation profile $\vals$ we have
\begin{equation*}
\sum_{i=1}^{n} u_i(\bids;\vali) = SW(\bids;\vals) - p(\bids).
\end{equation*}
Combining the last inequality and equation yields
\begin{equation*}
\E_{\vals}\left[SW(\sds(\vals);\vals)\right] =\sum_{i=1}^{n}\E_{\vals}[u_i(\sds(\vals);\vali)] + \E_{\vals}\left[p(\sds(\vals))\right]\geq \frac{1}{2}\E_\vals\left[\opt(\vals)\right],
\end{equation*}
which concludes the proof. 
%
\end{proof}

\begin{remark}
It is natural to ask if the bound of $1-1/e$ in
Theorem~\ref{thm:single-item-fpa} is the best possible.
All that is currently known is that no bound better that $.87$ is
possible \cite{Hartline2014}.
Determining the precise worst-case price of anarchy of asymmetric
first-price auctions
is an interesting open question (Section~\ref{sec:open}).
\end{remark}

Interestingly, the proof of Theorem~\ref{thm:single-item-fpa} never
used the assumption that bidders' valuations are independent.
We therefore have an even stronger guarantee.

\begin{theorem}\label{thm:single-item-corr-fpa} Every Bayes-Nash
  equilibrium of a first-price auction with correlated valuation
  distributions has expected social welfare at least $1-1/e$ times
  that of the expected optimal welfare.
\end{theorem}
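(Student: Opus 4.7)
The plan is simply to revisit the proof of Theorem~\ref{thm:single-item-fpa} and verify that independence of the valuation distributions is never actually used; the correlated version then follows from the identical argument, with at most a bookkeeping change in how expectations are written.

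First I would re-establish the pointwise smoothness-type inequality~\eqref{eqn:example-smoothness}: for every valuation profile $\vals$ and every bid profile $\bids$,
\[
\sum_{i=1}^{n} u_i(\bidi^*, \bidsmi; \vali) \;\geq\; \tfrac{1}{2}\,\opt(\vals) - p(\bids),
\]
with $\bidi^* = \vali/2$ (or the randomized analogue supported on $[0,(1-1/e)\vali]$ for the sharper constant). This is a pointwise consequence of the quasi-linear form~\eqref{eq:qlsi} of utilities and the definition of the benchmark~\eqref{eq:opt}, so it is entirely distribution-free. Likewise, the identity $\sum_{i=1}^{n} u_i(\bids;\vali) = SW(\bids;\vals) - p(\bids)$ invoked at the end of the proof is pointwise in $(\bids,\vals)$.

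Next I would invoke the Bayes-Nash equilibrium condition. The crucial observation is that the hypothetical deviation $\bidi^* = \vali/2$ depends only on $\vali$, so it is a valid deviation available to bidder $i$ regardless of how $\valsmi$ is correlated with $\vali$. Under the joint prior, the BNE condition reads
\[
\E_{\valsmi \mid \vali}\!\bigl[u_i(\sds(\vals);\vali)\bigr] \;\geq\; \E_{\valsmi \mid \vali}\!\bigl[u_i(\bidi^*, \sdsmi(\valsmi);\vali)\bigr],
\]
where the conditional expectation is now taken under the correlated distribution of $\valsmi$ given $\vali$. Taking expectation over $\vali$ under its marginal, summing over $i$, applying the pointwise smoothness inequality inside the resulting expectation (with $\bids = (\bidi^*, \sdsmi(\valsmi))$), and combining with the quasi-linearity identity, I obtain $\E_\vals[SW(\sds(\vals);\vals)] \geq \tfrac{1}{2}\,\E_\vals[\opt(\vals)]$, upgraded to $1-1/e$ via the optimized randomized deviation as in Theorem~\ref{thm:single-item-fpa}.

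The step requiring the most care is the expectation bookkeeping: under correlation the conditional distribution of $\valsmi$ given $\vali$ genuinely depends on $\vali$, so one must not be tempted to factor any expectation as if the coordinates were independent. Once one observes that every inequality and identity in the proof is either pointwise in $(\vals,\bids)$ or an immediate application of the BNE definition to a deviation that is a function of $\vali$ alone, nothing breaks. I therefore do not anticipate a genuine obstacle; the work is purely in verifying that the original proof was, as advertised, formulated generally enough to cover correlated priors.
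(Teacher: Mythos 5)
Your proposal is correct and takes essentially the same approach as the paper, which establishes this result by the one-line observation that the proof of Theorem~\ref{thm:single-item-fpa} nowhere invokes independence. You correctly isolate the crucial point: the deviation $\bidi^* = \vali/2$ depends only on bidder $i$'s own valuation (what the paper later formalizes as smoothness with private deviations), so both the pointwise smoothness inequality and the Bayes-Nash best-response condition go through unchanged under a correlated prior.
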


\begin{remark}
For correlated valuation distributions, this bound of $1-1/e$ 
is tight (see \cite{SyrgkanisThesis} for an explicit example).
\end{remark}


\section{A More Complex Example: Simultaneous Single-Item First-Price Auctions}
\label{sec:sim-fpa}

\subsection{Multi-Item Auctions}

Before moving to a general auction setting,
we explore a more complex example and introduce the idea of smooth
auctions.
In this section, we consider an auction for multiple heterogeneous
items and quantify the equilibrium inefficiency of a simple
decentralized auction format.

Consider a set of $m$ items being auctioned off to a set of $n$
bidders. Each bidder $i\in [n]$ has a value $v_{ij}$ for each item
$j\in [m]$ and only wants one item. If she happens to win a set of
multiple items $S$, then her valuation for the set is the
highest-valued item in the set, i.e., $v_i(S) = \max_{j\in S}
v_{ij}$.
Such bidders are often called {\em unit-demand} bidders.
This
setting has a long and distinguished history in the economics
literature. It is a generalization of the \emph{assignment model}
analyzed by \citeA{Shapley1971}, where
the notion of \emph{core outcomes} was originally introduced,
and it is also the setting considered by
\citeA{Demange1986}, who introduced the first ascending
auctions for multi-item settings that implement welfare-optimal
outcomes. 
From an algorithmic point of view, the welfare optimization problem
in this setting is
simply the maximum weighted bipartite \etedit{matching} problem
\cite{Cook1997}, which also has a long and distinguished history in
combinatorial optimization.

This section analyzes the efficiency of the simple
auction format
that sells each item $j$ simultaneously and
independently using a single-item first-price
auction (Section~\ref{sec:fpa}).
Each player $i$
submits a bid $b_{ij}$ for each item $j$. Each item $j$ is
awarded to the highest bidder for the item and each bidder is asked to
pay her bid for each item that she won. The utility of a player is her
value for the set of items she won minus her total payment. Thus, if we
denote by $S_i(\bids)$ the set of items allocated to player $i$ under a
bid profile $\bids$, then:
\begin{equation}\label{eq:ql-sim}
u_i(\bids; v_i) = v_i(S_i(\bids)) - \sum_{j\in S_i(\bids)} b_{ij} =  \max_{j\in S_i(\bids)} v_{ij} - \sum_{j\in S_i(\bids)} b_{ij}
\end{equation}

The analysis of this auction game with unit-demand bidders dates back
to the early work of
\citeA{Engelbrecht1979} --- when auction theory was still in its
infant stages --- who analyzed the very special case of $n$ bidders
and $n$ items, with all players having a valuation of $1$ for each
item and constrained to bid on only one item.
The optimal welfare is clearly~$n$.
Interestingly, they show that there exists a symmetric mixed
Nash equilibrium whose expected welfare approaches $(1-1/e)n$
as $n \rightarrow \infty$.
In other words,
they showed that the price of anarchy of this auction game can be as
bad as $1-1/e$. 
This result can be viewed as the first price-of-anarchy
bound in the realm of simple auctions. Surprisingly, the
techniques that we describe in this survey will show that the \tedelete{price of
anarchy of any instance of this auction game is at least $1-1/e$.
We conclude that the} example proposed by \citeA{Engelbrecht1979}
exhibits the worst-possible inefficiency, over all numbers of bidders
and items, all choices of bidders' valuation distributions, and all
Bayes-Nash equilibria.

Two decades later, \citeA{Bikhchandani1999} analyzed a generalization
of the game where players can have arbitrarily complex valuations over
the items and without the restriction of bidding on only one item.
He focused on the special case of complete information, where all players'
valuations are common knowledge, and of pure-strategy Nash equilibria,
where each bidder deterministically chooses a single strategy.
He showed that complete-information pure-strategy Nash equilibria are
fully efficient.
\trdelete{
, and guaranteed to exist if and only if
strategies induces a welfare-optimal allocation. He
also showed that a Nash equilibrium with deterministic strategies
exists if and only if the player valuations are such that a
\emph{Walrasian equilibrium} exists in the \etedit{corresponding}
market \cite{??}.}
We are interested in the case of incomplete information and general
Bayes-Nash equilibria; as we know from the previous section, these are
not always fully efficient, even in the case of a single item.

The goal of this section is to prove an approximate efficiency
guarantee for simultaneous first-price auctions with arbitrary
independent distributions over unit-demand valuations.
\begin{theorem}\label{thm:sim-fpa-bne}
Every Bayes-Nash equilibrium of the simultaneous first-price auction
game with unit-demand bidders \etedit{and independent valuations}
achieves expected social welfare at least $1-1/e$ times the expected
optimal welfare.
\end{theorem}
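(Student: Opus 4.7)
The plan is to extend the smoothness-style argument of Theorem~\ref{thm:single-item-fpa} to the simultaneous-auctions setting, via a hypothetical deviation that is tailored to the unit-demand matching structure. The three ingredients are: a per-item randomized deviation imported from the single-item case; a scheme for choosing which item each bidder ``targets'' using only her private information; and a coupling argument that aligns those targeted items with the optimal matching on the true profile.

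I would first invoke the randomized single-item smoothness inequality underlying the $1-1/e$ bound promised in Theorem~\ref{thm:single-item-fpa}: for every value $v$ there is a distribution $D_v$ supported on $[0,(1-1/e)v]$ such that for every $p \geq 0$,
\begin{equation*}
\E_{B \sim D_v}\bigl[(v - B)\cdot \mathbf{1}[B > p]\bigr] \;\geq\; (1-1/e)\, v - p.
\end{equation*}
For bidder~$i$ with value vector $\vali$ the deviation $\bidi^*$ is then defined as follows: she independently draws a hypothetical profile $\tilde{\valsmi} \sim \F_{-i}$ of the other bidders' valuations, computes the welfare-maximizing matching on $(\vali,\tilde{\valsmi})$, and lets $j_i^*$ denote the item it assigns her. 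She bids $0$ on every item other than $j_i^*$, and on $j_i^*$ she submits an independent draw from $D_{v_{i,j_i^*}}$. Because she bids positively on at most one item, applying the single-item inequality to the first-price auction on item $j_i^*$ (with $p = p_{j_i^*}(\sdsmi(\valsmi))$, the maximum competing bid on that item) yields
\begin{equation*}
\E\bigl[u_i(\bidi^*,\sdsmi(\valsmi);\vali) \mid \vali,\tilde{\valsmi},\valsmi\bigr] \;\geq\; (1-1/e)\, v_{i,j_i^*} \;-\; p_{j_i^*}(\sdsmi(\valsmi)).
\end{equation*}

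Summing over bidders, taking expectations, and invoking the Bayes--Nash inequality $\E[u_i(\sds(\vals);\vali)] \geq \E[u_i(\bidi^*,\sdsmi(\valsmi);\vali)]$ gives
\begin{equation*}
\sum_i \E[u_i(\sds(\vals);\vali)] \;\geq\; (1-1/e)\sum_i \E\bigl[v_{i, j_i^*(\vali,\tilde{\valsmi})}\bigr] \;-\; \sum_i \E\bigl[p_{j_i^*(\vali,\tilde{\valsmi})}(\sdsmi(\valsmi))\bigr].
\end{equation*}
The main obstacle is a swap-of-samples trick to rewrite this right-hand side. Since $\tilde{\valsmi}$ and $\valsmi$ are i.i.d.\ from $\F_{-i}$, I can relabel them inside each expectation. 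In the first sum this replaces the ``targeted'' items with the true optimal matching on $\vals$, yielding $(1-1/e)\,\E[\opt(\vals)]$. In the second sum it turns $j_i^*(\vali,\tilde{\valsmi})$ into $j_i^*(\vals)$, which is a genuine matching across bidders, so that $\sum_i p_{j_i^*(\vals)}(\sdsmi(\tilde{\valsmi}))$ is bounded by the total revenue of the auction on the hypothetical profile $\sds(\tilde{\vals})$; by exchangeability, its expectation equals the expected revenue of the actual equilibrium $\sds(\vals)$.

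Finally, combining with the quasi-linear identity $\sum_i u_i(\sds(\vals);\vali) = SW(\sds(\vals);\vals) - \mathrm{revenue}(\sds(\vals))$ makes the revenue terms cancel, producing the desired bound $\E[SW(\sds(\vals);\vals)] \geq (1-1/e)\,\E[\opt(\vals)]$. Both the independence assumption and the unit-demand matching structure enter essentially at the swap-and-matching step: without independence the swap is invalid, and without a matching structure the per-item prices in the hypothetical profile could be double-counted across bidders.
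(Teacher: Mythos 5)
Your proposal is correct and follows essentially the same route as the paper: you construct the smoothness deviation by bidding (an optimized randomized shade of) your value on the item you'd receive in the optimal matching for a resampled valuation profile, then use the independence of $\valsmi$ and the resampled $\tilde{\valsmi}$ to swap them so that the deviation is evaluated against the "true" optimal matching while the opponents' bids become decoupled from those valuations --- this is precisely the paper's relabeling step in the proof of Theorem~\ref{thm:sim-fpa-bne}. The only loose end is your phrase bounding $\sum_i p_{j^*(i,\vals)}(\sdsmi(\tilde{\valsmi}))$ "by the revenue on the hypothetical profile $\sds(\tilde{\vals})$": since each bidder samples her own $\tilde{\valsmi}$, there is no single hypothetical profile, and the cleaner statement (as in the paper) is that after decoupling, $\sum_i p_{j^*(i,\vals)}(\bidsmi) \leq \rev(\bids)$ holds pointwise for $\bids \sim \G$ independent of $\vals$, because $j^*(\cdot,\vals)$ is a matching.
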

We start with the simple case where players' valuations are
common knowledge. Unlike \citeA{Bikhchandani1999}, we analyze the
inefficiency of mixed-strategy Nash equilibria (where players can
randomize).
We will then see how our conclusions
for mixed Nash equilibria of the complete information setting also
extend to the incomplete information setting. In the remainder of the
survey we will see how this type of extension is not specific to the
game studied in this section, but applies more generally to many
auction environments.

\subsection{Warm-up: Complete Information}\label{ss:complete}

We begin with the case studied by \citeA{Bikhchandani1999}, of
pure-strategy complete-information Nash equilibria.
Fix a profile $\vals$ of unit-demand valuations.  Fix a
welfare-maximizing allocation, where without loss each bidder receives at
most one item, and let $j^*(i)$ denote the item awarded to bidder~$i$
in the allocation (if any).

At a pure Nash equilibrium,
each player $i$ submits a bid vector $b_{i}=(b_{ij})_{j\in [m]}$ such that, for all \etedit{vectors} $b_i'$:
\begin{equation}\label{eq:no-regret}
u_i(\bids; v_i) \geq u_i(b_i',\bidsmi; v_i).
\end{equation}
Let $p_j(\bids)$ denote the price at which item $j$ is sold.
Since a player $i$ does not gain from deviating to any other
strategy, she does not gain by bidding infinitesimally above the
current price on the item $j^*(i)$ and zero on all other items.
Denote this deviating bid vector by
$b_i^*$. By bidding $b_i^*$, player $i$ definitely wins his optimal
item and pays $p_{j^*(i)}(\bids)$, deriving utility
\begin{equation*}
u_i(b_i^*,\bidsmi;v_i) = v_{ij^*(i)} - p_{j^*(i)}(\bids)
\end{equation*}
Summing over the bidders~$i$, we conclude that there exist deviations
$b_1^*,\ldots,b^*_n$ such that
\begin{equation}\label{eqn:pne-smoothness}
\sum_{i\in [n]} u_i(b_i^*,\bidsmi;v_i)
=
\opt(\vals) - \sum_{j\in [m]}p_{j}(\bids).
\end{equation}
Using the fact that the sum of the Nash equilibrium utilities is at
least the sum of these deviating utilities
(applying~\eqref{eq:no-regret} with $b_i'=b_i^*$)
and that the total utility equals the welfare minus the revenue,
we can easily derive the following theorem.

\begin{theorem}[\citeA{Bikhchandani1999}]\label{thm:sim-fpa-pne}
Every complete-information pure Nash equilibrium of the simultaneous
first-price auction game with unit-demand players
achieves the maximum-possible welfare.
\end{theorem}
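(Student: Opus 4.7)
The plan is to assemble three ingredients already in hand: the pure Nash inequality~\eqref{eq:no-regret}, the deviation calculation~\eqref{eqn:pne-smoothness}, and the quasi-linear identity relating summed utilities to welfare minus revenue. Since all the hard work (choosing the right deviation $b_i^*$ and bounding its utility) was done in the lead-up to the theorem, the remaining argument is a short chain of substitutions.

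First I would instantiate~\eqref{eq:no-regret} with $b_i' = b_i^*$ for every bidder $i$ and sum over $i \in [n]$, obtaining
\[
\sum_{i\in[n]} u_i(\bids; v_i) \;\geq\; \sum_{i\in[n]} u_i(b_i^*, \bidsmi; v_i) \;=\; \opt(\vals) - \sum_{j\in[m]} p_j(\bids),
\]
where the equality is~\eqref{eqn:pne-smoothness}. Next I would rewrite the left-hand side using the quasi-linear form~\eqref{eq:ql-sim}: because in the profile $\bids$ each item $j$ is won by exactly one bidder who pays $p_j(\bids)$, the double sum of payments telescopes, giving $\sum_i u_i(\bids; v_i) = SW(\bids;\vals) - \sum_j p_j(\bids)$. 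Substituting this and cancelling the common $\sum_j p_j(\bids)$ terms yields $SW(\bids;\vals) \geq \opt(\vals)$, and the reverse inequality follows from the definition of $\opt$, so equality holds.

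The one point that deserves care is that the deviation $b_i^*$ is defined as bidding \emph{infinitesimally above} $p_{j^*(i)}(\bids)$, so strictly speaking $u_i(b_i^*, \bidsmi; v_i) = v_{ij^*(i)} - p_{j^*(i)}(\bids)$ should be read as a supremum over $\varepsilon$-deviations in which bidder $i$ bids $p_{j^*(i)}(\bids) + \varepsilon$. The Nash condition~\eqref{eq:no-regret} must hold for every such $\varepsilon > 0$, and letting $\varepsilon \downarrow 0$ recovers the stated bound. This limiting step (together with the implicit assumption that ties are broken favorably for the deviator, or else handled by the $\varepsilon$ slack) is the only subtlety; given the preceding setup, I do not expect any genuine obstacle beyond this bookkeeping.
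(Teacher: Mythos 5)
Your proposal is correct and follows essentially the same route as the paper: instantiate the pure Nash condition \eqref{eq:no-regret} with the deviations $b_i^*$, sum over bidders, invoke \eqref{eqn:pne-smoothness}, and use the utility--welfare--revenue identity to cancel the revenue terms. The $\varepsilon$-limiting remark is a fair piece of bookkeeping (the paper's ``infinitesimally above'' phrasing and the equality in \eqref{eqn:pne-smoothness} are implicitly read as a supremum over $\varepsilon$-deviations), but it does not change the argument.
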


The above result may seem surprising --- every equilibrium outcome
corresponds to an optimal matching, the solution of a non-trivial
combinatorial optimization problem.  Thus decentralized optimization
by competing bidders yields a globally optimal solution.
But how robust is this result?

\trdelete{
 What if each player is not using a deterministic bid vector, but
 rather a randomized bid? A Nash equilibrium with deterministic
 strategies is not always guaranteed to exist in a finite normal form
 game, but a mixed Nash equilibrium always exists. Hence, having a
 result for a solution concept that always exists seems important.
}

Unfortunately, the analysis in the proof of Theorem
\ref{thm:sim-fpa-pne} breaks down when we try to apply it to a mixed
Nash equilibrium, even in the case of complete information.
Specifically, if the bid profile
is random, then the price $p_j(\bids)$ of an item is a random
variable.
A player is not in a position to deviate to bidding $p_j(\bids)$ on an
item, since she does not know the realization of $p_j(\bids)$ at the
time of bidding.
This issue cannot be mitigated by a different analysis,
as there exist inefficient mixed Nash equilibria of
the complete-information simultaneous first-price auction
game.\footnote{As mentioned above, this was first observed by
\citeA{Engelbrecht1979}. 
The price of anarchy remains at most $1-1/e$
even if players can bid on multiple items, if players can have
valuations slightly more general than unit-demand
\cite{Christodoulou2013}.}
\trdelete{
have slightly more general valuations called multi-unit-demand
or OXS
\footnote{Specifically, the valuations considered have the
  following form: each bidder partitions the items into groups
  $C_1,\ldots,C_t$. She wants only one item from each partition and
  has value $1$ for getting any item, while she is additive across
  partitions, i.e.,
  $v_i(S) = |\{r\in [t]: \exists j\in C_r\cap S\}|$.}
}

\begin{example}{Inefficiency of mixed Nash equilibria}\label{ex:badmne}
  Consider the case of two bidders and two items. Each bidder has a
  value of $1$ for any of the items and bidders are restricted to bid
  on at most one item. It is relatively easy to show that the
  following is a symmetric mixed Nash equilibrium of the game: each
  bidder picks an item uniformly at random and then submits a bid $x$
  on that item drawn from a distribution with cumulative density
  function
\begin{equation*}
F(x) =\frac{x}{1-x}
\end{equation*}
and support $\left[0,\frac{1}{2}\right]$. It is easy to check that the
utility of a player from any bid between $[0,1/2]$ on any item is
equal to $1/2$ and is strictly lower for higher bids.
The expected welfare of this equilibrium is equal to the expected
number of items that are allocated to some bidder. Each item is
allocated to some bidder with probability $3/4$. Thus the total
expected welfare is equal to $3/2$, while the optimal welfare is equal
to $2$.
If the example is extended to $n$ bidders and $n$ items, then the
price of anarchy tends to~$1-1/e$ as $n \rightarrow \infty$.
\end{example}

The primary obstacle to extending the full efficiency 
\etedit{proof (Theorem \ref{thm:sim-fpa-pne})} to mixed Nash
equilibria is the dependence of the proposed deviating bid
on the current bids of the other players.
Any analysis that used deviations that depend on the realization of
others' bids
will only hold for the case of pure Nash equilibria.

One solution would be
to prove an approximate efficiency result that makes use only of
deviations that are independent of others' actions.  But is this even
possible?

The previous section provides an affirmative answer in the case of
first-price single-item auctions.
%
Specifically, in the proof of Theorem~\ref{thm:single-item-fpa},
the deviations used only require that a bidder bid half her value.
This deviation did not depend on others' bids,
and guaranteed a utility at least half of the player's valuation minus the
realization of the item's price, whatever the price may be
(recall~\eqref{eqn:half-value-lower}).

We now extend this idea to multi-item auctions.
Define bidder~$i$'s deviation~$b^*_i$ as bidding half her value on the
item $j^*(i)$ that she receives in some fixed optimal allocation (with
at most one item per bidder), and 0 on all other items.
Following the proof of Theorem~\ref{thm:single-item-fpa},
we have
\begin{equation*}
u_i(b_i^*,\bidsmi;v_i)\geq \frac{v_{ij^*(i)}}{2} - p_{j^*(i)}(\bids)
\end{equation*}
for every bid profile $\bids$ and hence
\begin{equation}\label{eqn:complete-info-smoothness}
\sum_{i\in [n]} u_i(b_i^*,\bidsmi;v_i) \geq \frac{1}{2} \opt(\vals) - \sum_{j\in [m]} p_j(\bids).
\end{equation}
If $\bids$ is a pure Nash equilibrium, then the
inequality~\eqref{eqn:complete-info-smoothness} and the reasoning in
the proof of Theorem~\ref{thm:single-item-fpa}
imply that the social
welfare of $\bids$ is at least half of the maximum possible.

The key point is that, because the deviations $b_i^*$ used in this
derivation are independent of the bid profile $\bids$, the
approximation guarantee applies more generally to mixed Nash equilibria
(and even more generally, see Section
\ref{sec:no-regret}).
To see this, note that each deviation $b_i^*$ is now well defined
even when others' bids are randomized.
Taking the expectation of
\eqref{eqn:complete-info-smoothness} over the mixed Nash equilibrium
bid profile distribution (with the deviations~$b_i^*$ fixed) and using
that players are
best-responding in expectation proves that the expected welfare of
every mixed Nash equilibrium is at least $\tfrac{1}{2}\opt(\vals)$.
The optimized deviations mentioned in Section~\ref{sec:fpa} can be used
to improve the bound from $\tfrac{1}{2}$ to $1-1/e$.
\etedit{\begin{theorem}\label{thm:sim-fpa-ne} Every
complete-information mixed
    Nash equilibrium of the simultaneous first-price auction game with
    unit-demand players achieves expected
    welfare at least $1-1/e$ times the maximum possible.
\end{theorem}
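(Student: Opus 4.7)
The plan is to replay the proof strategy that established inequality~\eqref{eqn:complete-info-smoothness}, but to sharpen the hypothetical deviation from the deterministic half-value bid to a well-chosen randomized bid, exactly as suggested in the footnote following Theorem~\ref{thm:single-item-fpa}. The aim is a smoothness-type inequality with $(1-1/e)\opt(\vals)$ in place of $\tfrac{1}{2}\opt(\vals)$; once this is in hand, the rest of the argument (best-response, quasi-linear identity, passage to expectations under the mixed Nash equilibrium distribution) is identical to the $\tfrac{1}{2}$-bound proof sketched after~\eqref{eqn:complete-info-smoothness}.

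Fix a welfare-maximizing allocation and let $j^*(i)$ denote the item assigned to bidder $i$. For each bidder~$i$ who receives an item, define a randomized deviation $b_i^*$ that bids $0$ on every item $j \ne j^*(i)$ and on item $j^*(i)$ submits an independent random bid $X_i$ drawn from the density
\begin{equation*}
g_i(x) = \frac{1}{v_{ij^*(i)} - x}, \qquad x \in \left[0,\; (1-1/e)\,v_{ij^*(i)}\right].
\end{equation*}
A direct computation shows that $g_i$ integrates to $1$, so it is a valid distribution. I then establish the pointwise inequality
\begin{equation*}
\E_{X_i}\left[u_i(b_i^*,\bidsmi;v_i)\right] \;\geq\; (1-1/e)\,v_{ij^*(i)} \;-\; p_{j^*(i)}(\bids)
\end{equation*}
for every bid profile $\bids$. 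When $p_{j^*(i)}(\bids) \geq (1-1/e)\,v_{ij^*(i)}$ the right-hand side is non-positive and there is nothing to prove; otherwise the deviation wins item $j^*(i)$ precisely when $X_i > p_{j^*(i)}(\bids)$ (bidding $0$ on the other items keeps the total payment equal to $X_i$), so the expected utility equals $\int_{p_{j^*(i)}(\bids)}^{(1-1/e)\,v_{ij^*(i)}} (v_{ij^*(i)} - x)\,g_i(x)\,dx$, and the choice of $g_i$ makes the integrand the constant $1$, yielding exactly $(1-1/e)\,v_{ij^*(i)} - p_{j^*(i)}(\bids)$.

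Summing this per-bidder inequality over all $i$ produces the multi-item smoothness inequality
\begin{equation*}
\sum_{i=1}^{n} \E_{X_i}\left[u_i(b_i^*,\bidsmi;v_i)\right] \;\geq\; (1-1/e)\,\opt(\vals) \;-\; \sum_{j=1}^{m} p_j(\bids),
\end{equation*}
valid for every bid profile $\bids$. Drawing $\bids$ from the joint distribution of an arbitrary mixed Nash equilibrium and taking expectations, the best-response property (applied bidder by bidder, and valid because $b_i^*$ is independent of $\bidsmi$) gives $\E[u_i(\bids;v_i)] \geq \E[u_i(b_i^*,\bidsmi;v_i)]$ for each $i$. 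Combined with the quasi-linear identity $\sum_i u_i(\bids;v_i) + \sum_j p_j(\bids) = SW(\bids;\vals)$, this yields $\E[SW(\bids;\vals)] \geq (1-1/e)\,\opt(\vals)$, as required. The main technical content is the single-item verification in the middle paragraph --- exhibiting a deviation distribution whose expected utility dominates $(1-1/e)\,v - p$ uniformly in the opposing price $p$ --- and this is precisely where the constant $1-1/e$ enters the argument; every other step is a direct adaptation of the proof of the $\tfrac{1}{2}$-bound.
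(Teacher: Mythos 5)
Your proposal is correct and takes essentially the same approach as the paper: the paper sketches the $\tfrac{1}{2}$ argument via the deterministic half-value deviation and then states that the optimized randomized deviation (from the footnote in Section~\ref{sec:fpa}, with density proportional to $1/(v-x)$ on $[0,(1-1/e)v]$) upgrades the constant to $1-1/e$, and you have correctly filled in exactly those optimization details and carried them through the same best-response and quasi-linearity steps.
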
}

\subsection{Incomplete information}

The preceding section showed how to cope with randomness in
bidders' strategies and prove efficiency guarantees for
(complete-information) mixed Nash equilibria.
We now consider the incomplete-information case, where the valuation of
each bidder~$i$ is drawn independently from a distribution $\F_i$.
In this setting, a bidder knows her own valuation
and the $\F_i$'s, but not the realizations of others' valuations.
\trdelete{the independent private value Bayesian model that we
introduced and analyzed in Section \ref{sec:fpa} for single-item
first-price auctions, with the valuation of each bidder~$i$ drawn from
a distribution $\F_i$.}
Do our previous price-of-anarchy bounds continue to hold?


Coming up with well-defined deviations $b^*_i$ is again the primary
obstacle to extending our results.
To understand the issue,
recall how we defined $b^*_i$ in the proof of
Theorem~\ref{thm:sim-fpa-ne}, with bidder~$i$ bidding half her
valuation on the item $j^*(i)$ (if any) that she receives in some fixed optimal
allocation.  The optimal allocation, and hence the identity of the
item $j^*(i)$, depend on the full valuation profile $\vals$.  When
valuations are commonly known, bidder~$i$ knows the identity of the
item $j^*(i)$ and is in a position to execute the deviation $b^*_i$.  If
bidder~$i$ only knows a distribution over others' valuations and not
the valuations themselves, then she only knows a distribution over the
possible identities of $j^*(i)$.  The previous deviation~$b^*_i$ is no
longer well defined, derailing the argument.


This obstacle motivates
defining each deviation $b^*_i$ in a way that depends only on the
player's own valuation.
This seems rather restrictive!
Interestingly,
we show \etedit{that} there is an essentially black-box
way to transform each price-independent deviation
used in the proof of Theorem~\ref{thm:sim-fpa-ne}
so that it no longer depends on
the valuations of other bidders,
while at the same time implying the exact same
efficiency guarantee!
The key idea is to use the following (randomized) deviation: a bidder~$i$
samples valuations $\valsmi'$ for the other bidders according to the
(known) valuation distributions and uses $\valsmi'$ as a surrogate
for the true but unknown valuations $\valsmi$.  That is, the bidder
bids half her value on the item $j'(i)$ and 0 on the other items,
where $j'(i)$ is the item $i$ receives (if any) in an optimal
allocation when bidder~$i$ has valuation~$\vali$ and the other bidders
have valuations~$\valsmi'$.  Crucially, this randomized deviation
depends only on bidder~$i$'s valuation (and the \teedit{distributions $\F_{-i}$}), and not on
anyone's bid nor on any other bidder's valuation.
This idea originates in \citeA{Christodoulou2008} and its generality
is made clear in \citeA{Roughgarden2012} and \citeA{Syrgkanis2012}.
It enables us to extend the argument in the proof of
Theorem~\ref{thm:sim-fpa-ne} to establish Theorem~\ref{thm:sim-fpa-bne}.


\begin{proofof}{Theorem \ref{thm:sim-fpa-bne}}
We prove a price-of-anarchy bound of~$\tfrac{1}{2}$.  The improvement
to $1-1/e$
follows similar lines as in the proof of Theorem~\ref{thm:single-item-fpa}.

Denote by $j^*(i,\vals)$ the item awarded to player $i$ in the optimal
allocation for the valuation profile $\vals$.
Define $b_i^*(\vals)$ as in the proof of Theorem~\ref{thm:sim-fpa-ne},
as the bid vector where player $i$ bids half of her
value on item $j^*(i,\vals)$
and zero on every other item.
Inequality~\eqref{eqn:complete-info-smoothness} from the proof
of Theorem~\ref{thm:sim-fpa-ne} implies that,
for every valuation profile $\vals$ and every bid profile $\bids$,
\begin{equation}\label{eqn:sim-fpa-smoothness}
\sum_{i\in [n]} u_i(b_i^*(\vals),\bidsmi;v_i) \geq \frac{1}{2} \opt(\vals) - \rev(\bids),
\end{equation}
where $\rev(\bids)=\sum_{j\in [m]} p_j(\bids)$ denotes the total
revenue of the auction. \teedit{Recall that $b_i^*(\vals)$ is not a valid deviation in the incomplete-information setting, as player $i$ is not aware of the valuations $\valsmi$.}

Consider a Bayes-Nash equilibrium profile of strategies where,
  conditional on her valuation being $v_i$,
player~$i$ chooses a bid according to some distribution $D_i(v_i)$.
For conciseness, denote by $\G_i$ the distribution of player~$i$'s
bid at this equilibrium (drawing $\vali$ from $\F_i$ and then
$b_i$ from $D_i(v_i)$).
Crucially, because
players' valuations are independently distributed, the joint distribution
of equilibrium bids is just the product distribution
$\G_1\times\ldots\times \G_n$, which we denote by~$\G$.  In particular,
the distribution $\G_{-i}$ of the equilibrium bids of players other
than~$i$ is unaffected by conditioning on player~$i$'s valuation
$v_i$.  This would not be the case if
players' valuations were correlated.

Now consider the following valid incomplete-information
  randomized deviation $b_i'\sim D_i'(v_i)$: player $i$ first randomly
  samples a valuation profile $\vals'_{-i} \sim \F_{-i}$ and then performs
  the deviation \teedit{$b_i^*(v_i,\valsmi')$.} 
The Bayes-Nash equilibrium conditions imply that, for every player~$i$
and possible valuation $\val_i$ of the player,
\begin{align*}
\E_{b_i \sim D_i(v_i), \bids_{-i} \sim G_{-i}}\left[u_i(\bids;
  v_i) \right]
& \ge
\E_{b'_i \sim D'_i(v_i), \bids_{-i} \sim
  \G_{-i}}\left[u_i(b'_i,\bidsmi; v_i) \right]\\
& = \E_{\vals'_{-i} \sim \F_{-i}, \bids_{-i} \sim
  \G_{-i}}\left[u_i(b^*_i(v_i,\vals'_{-i}),\bidsmi; v_i) \right].
\end{align*}
This inequality holds for every~$v_i$, and hence also in expectation
over~$v_i$:
\begin{align}\nonumber
\teedit{\E_{v_i \sim \F_i, b_i \sim D_i(v_i), \bids_{-i} \sim G_{-i}}\left[u_i(\bids; v_i) \right]
}
& \ge \E_{\val_i \sim
  \F_i, \vals'_{-i}
  \sim \F_{-i}, \bids_{-i} \sim
  \G_{-i}}\left[u_i(b^*_i(v_i,\vals'_{-i})),\bidsmi; v_i) \right]\\ \label{eq:u}
& =
\E_{\vals \sim \F, \bids \sim \G}
\left[u_i(b^*_i(\vals),\bidsmi; v_i) \right],
\end{align}
where in the equation we have renamed $\vals'_{-i}$ as $\valsmi$ and
also used the fact that the utility of player~$i$ after deviating is
independent of what she would bid at equilibrium.  We emphasize that,
on the right-hand side of~\eqref{eq:u}, $\vals$ and $\bids$ are drawn independently
from $\F$ and $\G$, respectively.

On the other hand, taking the expectation of inequality
\eqref{eqn:sim-fpa-smoothness} over $\vals \sim \F$ and $\bids \sim
\G$ shows that
the sum of the expected deviating utilities across players is
\begin{equation}\label{eqn:sim-fpa-fixed-ac-sum}
\sum_{i\in [n]} \E_{\vals \sim \F, \bids \sim \G}\left[
  u_i(b_i^*(\vals),\bidsmi;v_i) \right] \geq \frac{1}{2} \E_{\vals \in
  \F}\left[\opt(\vals)\right] -
\E_{\bids \sim \G}\left[\rev(\bids)\right].
\end{equation}
Summing inequality~\eqref{eq:u} over the bidders~$i$ \teedit{and} combining \teedit{it} with
inequality~\eqref{eqn:sim-fpa-fixed-ac-sum}, \teedit{we get
$$
\E_{\vals \sim \F, b_i\sim D_i(v_i) \forall i}\left[\sum_i u_i(\bids; v_i) \right]\ge \frac{1}{2} \E_{\vals \in
  \F}\left[\opt(\vals)\right] -
\E_{\bids \sim \G}\left[\rev(\bids)\right].
$$
Now} adding $\E_{\bids  \sim \G}\left[\rev(\bids)\right]$ to both sides shows that the expected welfare of the Bayes-Nash equilibrium is at least $\tfrac{1}{2}$ times the expected maximum welfare.
\end{proofof}


\section{General Auctions and Smoothness}
\label{sec:model}

The examples in the previous sections portrayed how we can bypass the
daunting task of characterizing the equilibria of a game of incomplete
information and directly show that every equilibrium is
approximately efficient. Our next goal is to develop
a general framework for
providing such efficiency guarantees, building on our previous arguments
for single- and multi-item first-price auctions.

\subsection{General Auction Mechanisms}

We begin by formally defining a general mechanism design setting and a
generic auction, introducing some essential notation used
throughout the survey.  In a general mechanism design setting, the
auctioneer solicits an action $\ac_i$ from each player $i$ from
some action space~$\A_i$ \teedit{(a bid in the case of the auctions considered in the previous two sections)}. Define
$\A=\A_1\times\ldots\A_n$. Given the action profile
$\acs=(\ac_1,\ldots,\ac_n)\in \A$, the auctioneer decides an outcome
$o(\acs)$ among a set of feasible outcomes $\Ou$. Part of an outcome is
also a payment $p_i(o)$ that the auctioneer receives from each
player. We denote by $\rev(o) =\sum_i p_i(o)$ the revenue of
the auctioneer.

Each player derives some utility which is a function of the outcome
and of a parameter~$\val_i$ taking values in a parameter space $\V_i$,
typically referred to as the {\em valuation} or {\em type} of the
player. We
denote by $u_i(o;\val_i)$ the utility of a player with type
$\val_i\in \V_i$ in the outcome $o\in \Ou$.  Write $\V$ for the set
$\V_1 \times \V_2 \times \cdots \times \V_n$ of valuation profiles.

For a given auction, since the outcome is uniquely defined by the
action profile, we overload notation and
write $u_i(\acs; \val_i)$ for the utility of the player
with type $\val_i$
\etedit{for
  the outcome} in the auction under an action profile $\acs$.
We also denote by $\rev(\acs)$ the revenue of the
auctioneer in action profile $\acs$.

\paragraph{A few additional examples.} \etedit{Before formally
  defining equilibria and smoothness and proving price-of-anarchy
  bounds, we give a few additional examples of auctions of
  interest that fit in the framework just defined.} \teedit{While the general definition allows the player utility $u_i(\acs; \val_i)$ to depend on the price paid in more complex ways, in all of our examples we use quasi-linear utility, where $u_i(\acs; \val_i)=v_i(\acs)-p_i(\acs)$, for some value function  $v_i(\acs)$ that depends only on the allocation of the auction, and $p_i(\acs)$ is the price paid by player $i$.}

\etedit{Our first additional} example is
again for the sale
of a single item, \etedit{but with the
  different payment rule in which \emph{all} players pay their bid,
  and not only   the winner}. 
\etedit{Such} {\em all-pay auctions} arise naturally in contest
settings where players have sunk costs that they have to pay
irrespective of whether or not they win a prize \etedit{(e.g., where
  the work invested acts as the payment)}. There are many works
in the economic literature
\cite{Amann1996,Baye1996,Gneezy2006,Siegel2014} and a few recent ones
in computer science \cite{Dipalantino2009,Chawla2012} that analyze
properties of \etedit{all-pay auctions.} 
As with the first-price auction, the asymmetric independent private
values model has been under-studied because of the difficulty of
characterizing its equilibria in a closed form.

\begin{example}{All-pay auction}\label{ex:all-pay}
  Consider a setting where $n$ players bid for a single item. Each
  player has a value $v_i$ for the item, drawn from a commonly known
  distribution $\F_i$. Each player submits a bid $b_i$. The highest
  bidder wins the
  item and every player pays her bid (whether a winner or not).
\end{example}

\etedit{A different example arises in the context of public goods,
such as a bridge or a park. A group of people
  needs to decide whether to build such a public good, and how to
  share the cost. This scenario is especially tricky when the public
  good is non-excludable, meaning that everyone can use it, whether or
  not they contributed to its construction (as with most bridges and
  parks).}

\etedit{
\begin{example}{Public Good}\label{ex:public-good}
A group of $n$ players bid on their share of a joint
project that has a publicly known cost $c$. Each player has a value
$v_i$ for the completion of the project.
Each player submits a bid $b_i$. If $\sum_i b_i \ge c$,
the project is undertaken, and all players pay their bids, so
bidder~$i$'s utility for the outcome is $v_i-b_i$. If $\sum_i b_i < c$,
then the project is abandoned and payments are \teedit{not} collected.
\end{example}
}

\paragraph{Incomplete information and equilibria.} The type $\val_i$ of each player is private information and is drawn independently from a commonly known distribution $\F_i$. \etedit{We denote by $\A_i$ the possible actions of the player, and by $\Delta(\A_i)$ the probability distributions over these actions.}
The strategy of a player in this auction is a mapping $\sr_i: \V_i\rightarrow \Delta(\A_i)$ from a type $\val_i\in \V_i$ to a distribution over actions $\sr_i(\val_i) \in \Delta(\A_i)$.
\begin{defn}[Bayes-Nash equilibrium ($\BNE$)] A strategy profile
  $\sigma=(\sigma_1,\ldots,\sigma_n)$ is a
{\em Bayes-Nash equilibrium} if: for each player $i$, for each type $\val_i\in \V_i$, and for each action $\ac_i'\in \A_i$:
\begin{equation*}
\E_{\valsmi\sim \F_{-i}}\left[ \E_{\acs\sim \sigma(\vals)}\left[u_i(\acs; \val_i)\right] \right] \geq \E_{\valsmi\sim \F_{-i}}\left[ \E_{\acs_{-i}\sim \sigma_{-i}(\valsmi)}\left[u_i(\ac_i',\acs_{-i}; \val_i)\right]\right].
\end{equation*}
\end{defn}


\paragraph{Social welfare.} We will be interested in analyzing the
social welfare, 
defined as the utility of all
participating parties (the bidders and the auctioneer):
\begin{equation}\label{eq:sw2}
SW(o;\vals) = \sum_{i=1}^{n} u_i(o;\val_i) + \rev(o).
\end{equation}
For players with quasi-linear utilities, as in~\eqref{eq:qlsi}
and~\eqref{eq:ql-sim}, this definition coincides with that
in~\eqref{eq:sw}.
The definition in~\eqref{eq:sw2}
makes sense with arbitrary player utility functions,
not just quasi-linear utility functions.
For a given valuation profile $\vals$, we denote the optimal welfare
of a feasible outcome by
$\opt(\vals) = \max_{o \in \Ou} SW(o;\vals)$.

We measure the inefficiency of Bayes-Nash equilibria of an auction
with the {\em price of anarchy (PoA)},
defined as
\begin{equation*}
\poa = \inf_{\F,~\sr \text{ is } \BNE}
\frac
{\E_{\vals\sim \F}\left[\E_{\acs \sim \sr(\vals)}\left[SW(\acs;\vals)\right]\right]}{\E_{\vals\sim \F}\left[\opt(\vals)\right]}.
\end{equation*}
The \poa is between 0 and 1, with numbers closer to~1 corresponding to
more efficient equilibria.


\subsection{Smooth Auctions}
\label{sec:smoothness}

We begin our analysis of general auction settings by observing
that all of the proofs of the efficiency guarantees in Sections
\ref{sec:fpa} and \ref{sec:sim-fpa} follow the exact same paradigm.
The key step is to
find an appropriate deviation \etedit{($b_i^*$, or more
  generally an action} $a_i^*$) for each player, such that no matter
what others' actions are, the utility achieved by the deviation can be
bounded below by some fraction of the player's contribution to the
optimal welfare, less some quantity that relates to the current revenue
of the auction. (In fact,
we only needed this in aggregate over all of the players.) This was
crystalized by Equation \eqref{eqn:example-smoothness} for the case of
Bayes-Nash equilibria of single-item first-price auctions, Equation
\eqref{eqn:pne-smoothness} for complete-information pure Nash
equilibria of simultaneous first-price auctions, and Equation
\etedit{\eqref{eqn:complete-info-smoothness}} 
for mixed Nash equilibria of such auctions. All of these inequalities
are variants of what we will call a \emph{smoothness-type}
inequality.

The only difference in the different versions of the argument is
the informational requirements of the deviations $a^*_1,\ldots,a^*_n$.
For instance, in Equation \eqref{eqn:pne-smoothness} each
\etedit{$b_i^*$} 
depends on the valuations and bids of all of the players; this
restricts the analysis to complete-information pure Nash equilibria.
In Equation
\etedit{\eqref{eqn:complete-info-smoothness}}, 
each $b_i^*$ depends only on the valuations of players and not on
their bids; this is sufficient to analyze mixed Nash equilibria and,
after the application of a ``black-box translation'' to remove the
dependence of a bidder's deviation on others' valuations, Bayes-Nash
equilibria in the incomplete-information setting with independent
valuation distributions.  In Equation \eqref{eqn:example-smoothness},
each \etedit{$b_i^*$} 
depends on the player's own valuation and on nothing else, and the
consequent approximate efficiency bound applies even to Bayes-Nash
equilibria of first-price single-item auctions with correlated bidder
valuations.

Our primary goal in this section is to understand the efficiency of
Bayes-Nash equilibria of auctions in the independent private
values model.  This motivates defining a
{\em smooth} auction as one that satisfies a natural
generalization of Equation \etedit{\eqref{eqn:complete-info-smoothness}}, 
with the restriction that the deviations used do not depend on
players' actions (but can depend on all players' valuations).

\trdelete{
In Section~\ref{sec:sim-fpa} we explain why Equation
\eqref{eqn:pne-smoothness} is not a strong enough property to
conclude approximate efficiency guarantees for the incomplete-information
case.  The analysis based on \eqref{eqn:pne-smoothness}
does not even carry over to mixed Nash equilibria in the
complete-information setting.
Thus allowing each deviating action to depend on all
players' actions seems too permissive and any conclusion on
approximate efficiency drawn from such an analysis will apply only to
the pure Nash equilibria of the auction. On the other hand, even though
Equation \eqref{eqn:example-smoothness} is strong enough to prove
efficiency guarantees for the incomplete-information model, even with
correlated
valuations (see Theorem \ref{thm:single-item-corr-fpa}), the
restriction that the deviation can only depend on the player's
own valuation is restrictive and ties the analyst's hands rather
tightly. For instance, the deviations that we used for mixed Nash
equilibria of simultaneous first-price auctions heavily depended on
other players' valuations.
}
\trdelete{
However, as we showed in the previous section, even if we allow the
deviations to depend on other players' valuations, there exists an
essentially black-box way of translating such deviations into
deviations that do not have such a dependence
(see the proof of Theorem \ref{thm:sim-fpa-bne} and the preceding
discussion).
Thus this
type of restriction on the deviations used, seems to combine almost
the best of both worlds: (1) it is permissive enough for the
\emph{lazy   analyst}
to only have to argue in the complete information setting,
(2) it is restrictive enough for the conclusions based on such an
analysis to carry over to the incomplete information setting. Due to
this fact,
}

\begin{defn}[Smooth auction]\label{def:smooth-auction}
For parameters $\lambda \ge 0$ and $\mu \ge 1$,
an auction is
  {\em $(\lambda,\mu)$-smooth} if for every valuation profile $\vals
  \in \V$ there exist action distributions
  $D^*_1(\vals),\ldots,D^*_n(\vals)$ over $\A_1,\ldots,\A_n$
such that, for every action
  profile $\acs$,
\begin{equation}\label{eqn:general-smoothness}
\sum_i \E_{a_i^*\sim D_i^*(\vals)}\left[u_i(a_i^*,\acs_{-i}; v_i)\right] \geq \lambda \opt(\vals) - \mu \rev(\acs).
\end{equation}
\end{defn}
Definition~\ref{def:smooth-auction} is due to \citeA{Syrgkanis2013},
inspired by previous definitions of ``smooth games'' in
complete-information \cite{Roughgarden2009} and
incomplete-information \cite{Roughgarden2012,Syrgkanis2012} settings.
The realizable smoothness parameters of a fixed auction format
generally depend on the class~$\V$ of permitted valuation profiles;
this point is particularly important in Section~\ref{sec:composability}.

On the surface, the inequality~\eqref{eqn:general-smoothness} appears
relevant only for case of complete information (since $\vals$ is
fixed in~\eqref{eqn:general-smoothness}) and pure Nash equilibria
(since $\acs$ is similarly fixed).  But the ideas we developed for
simultaneous first-price auctions (Section~\ref{sec:sim-fpa}) indicate
how to extend efficiency guarantees when actions or valuations are
randomized.
For example,
the proof of Theorem~\ref{thm:sim-fpa-ne} generalizes without
difficulty to all smooth auctions.

\begin{theorem}\label{thm:smooth-Nash}
If an auction is $(\lambda,\mu)$-smooth,
then for every valuation profile~$\vals \in \V$,
every complete-information mixed Nash equilibrium of the
auction has expected welfare at least $\frac{\lambda}{\mu} \cdot
\opt(\vals)$.
\end{theorem}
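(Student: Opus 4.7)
The plan is to mimic the argument used in the proof of Theorem~\ref{thm:sim-fpa-ne}, lifted to the abstract smoothness framework. Fix a valuation profile $\vals\in\V$ and a mixed Nash equilibrium $\sigma=\sigma_1\times\cdots\times\sigma_n$ (a product distribution, since the setting is complete-information and each player randomizes independently). Let $D_1^*(\vals),\ldots,D_n^*(\vals)$ be the deviation distributions guaranteed by $(\lambda,\mu)$-smoothness.

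First, I would invoke the smoothness inequality~\eqref{eqn:general-smoothness} pointwise for each realization $\acs$ of the equilibrium action profile, and then take expectations over $\acs\sim\sigma$. Because $\vals$ is fixed and the $D_i^*(\vals)$ do not depend on $\acs$, linearity of expectation yields
\begin{equation*}
\sum_i \E_{a_i^*\sim D_i^*(\vals),\,\acs\sim\sigma}\!\left[u_i(a_i^*,\acs_{-i}; v_i)\right] \;\geq\; \lambda\,\opt(\vals) - \mu\,\E_{\acs\sim\sigma}[\rev(\acs)].
\end{equation*}

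Next I would apply the mixed Nash equilibrium condition for each player~$i$: since $a_i^*\sim D_i^*(\vals)$ is a valid (randomized) deviation whose distribution is independent of $\acs_{-i}$, and since $\sigma$ is a product distribution so that $\acs_{-i}\sim\sigma_{-i}$ independently of $\sigma_i$, we have
\begin{equation*}
\E_{\acs\sim\sigma}[u_i(\acs;v_i)] \;\geq\; \E_{a_i^*\sim D_i^*(\vals),\,\acs_{-i}\sim\sigma_{-i}}\!\left[u_i(a_i^*,\acs_{-i};v_i)\right].
\end{equation*}
Summing these $n$ inequalities and chaining with the smoothness bound above gives
\begin{equation*}
\sum_i \E_{\acs\sim\sigma}[u_i(\acs;v_i)] \;\geq\; \lambda\,\opt(\vals) - \mu\,\E_{\acs\sim\sigma}[\rev(\acs)].
\end{equation*}

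Finally, I would use the identity $SW(\acs;\vals)=\sum_i u_i(\acs;v_i)+\rev(\acs)$ from~\eqref{eq:sw2}, adding $\E_{\acs\sim\sigma}[\rev(\acs)]$ to both sides to obtain
\begin{equation*}
\E_{\acs\sim\sigma}[SW(\acs;\vals)] \;\geq\; \lambda\,\opt(\vals) + (1-\mu)\,\E_{\acs\sim\sigma}[\rev(\acs)].
\end{equation*}
Since $\mu\geq 1$ and revenue is at most welfare (i.e.\ $\E[\rev(\acs)]\leq \E[SW(\acs;\vals)]$, because player utilities are nonnegative at equilibrium — any player could deviate to an action guaranteeing zero utility, such as abstaining), we can absorb the negative revenue term to conclude $\E_{\acs\sim\sigma}[SW(\acs;\vals)]\geq \frac{\lambda}{\mu}\,\opt(\vals)$. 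The main subtlety is the last step: one must use either the nonnegativity of equilibrium utilities (so revenue is bounded by welfare) or assume this as part of the framework; aside from this bookkeeping, the proof is a routine generalization of Theorem~\ref{thm:sim-fpa-ne}, with the product-distribution structure of mixed Nash equilibria ensuring that $a_i^*$ remains a legitimate deviation against $\acs_{-i}\sim\sigma_{-i}$.
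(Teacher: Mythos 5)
Your proof is correct and is essentially the argument the paper has in mind (the paper only says the proof of Theorem~\ref{thm:sim-fpa-ne} ``generalizes without difficulty to all smooth auctions'' rather than spelling it out). You also correctly identify the one step that is genuinely needed for $\mu>1$ but is glossed over in the paper's sketches: absorbing the residual $(1-\mu)\,\E[\rev(\acs)]$ term requires $\E[\rev(\acs)]\le \E[SW(\acs;\vals)]$, i.e.\ that aggregate equilibrium utility is nonnegative, which follows from the (implicit, standard) assumption that each player has an opt-out action guaranteeing nonnegative utility.
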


\trdelete{
\vsedit{Along similar lines as in the proof of Theorem \ref{thm:sim-fpa-bne} one can show that the almost black-box extension from complete information to incomplete information always holds in the general auction setting, when the auction is $(\lambda,\mu)$-smooth. The proof is based on the idea of considering the incomplete information deviation which first randomly samples a profile $v_{-i}$ of opponents values from $\F_{-i}$ and then draws an action from the complete information smoothness deviation $D_i^*(v)$.}
}

Similarly, the proof of Theorem~\ref{thm:sim-fpa-bne} generalizes to
all smooth auctions.

\begin{theorem}[Extension to Incomplete Information]\label{thm:extension-incomplete}
If an auction is $(\lambda,\mu)$-smooth,
then for every profile $\F_1,\ldots,\F_n$ of independent valuation
distributions over $\V_1,\ldots,\V_n$,
every Bayes-Nash equilibrium of the
auction has expected welfare at least $\frac{\lambda}{\mu} \cdot
\E_{\vals \sim \F}[\opt(\vals)]$.
\end{theorem}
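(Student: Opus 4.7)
The plan is to replay the argument in the proof of Theorem \ref{thm:sim-fpa-bne}, working directly from the smoothness inequality \eqref{eqn:general-smoothness} in place of the specific bound \eqref{eqn:sim-fpa-smoothness}. Fix a Bayes-Nash equilibrium $\sigma$, and let $\G_i$ denote the marginal distribution over $\A_i$ obtained by drawing $v_i \sim \F_i$ and then $a_i \sim \sigma_i(v_i)$. The crucial consequence of the independence of the $\F_i$'s is that the joint equilibrium action distribution factors as $\G = \G_1 \times \cdots \times \G_n$, so that $\G_{-i}$ is unaffected by conditioning on $v_i$.

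Next, I would construct the following valid incomplete-information deviation $D_i'(v_i)$ for each player $i$: first sample a surrogate profile $\valsmi' \sim \F_{-i}$ of the others' valuations, then draw $a_i^* \sim D_i^*(v_i, \valsmi')$ from the smoothness distribution guaranteed by Definition \ref{def:smooth-auction}. Because $\F_{-i}$ is publicly known and this deviation depends only on $v_i$, it is a bona fide Bayesian strategy for player $i$. Applying the Bayes-Nash equilibrium condition to $D_i'(v_i)$, taking expectations over $v_i \sim \F_i$, and using the fact that $\G_{-i}$ is independent of $v_i$, we obtain
\begin{equation*}
\E_{\vals \sim \F,\, \acs \sim \sigma(\vals)}\!\left[u_i(\acs; v_i)\right] \;\ge\; \E_{\vals \sim \F,\, \acs_{-i} \sim \G_{-i},\, a_i^* \sim D_i^*(\vals)}\!\left[u_i(a_i^*, \acs_{-i}; v_i)\right],
\end{equation*}
where on the right-hand side we have renamed $\valsmi'$ as $\valsmi$, and where $\vals$ and $\acs$ are drawn independently from $\F$ and $\G$ respectively.

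Now I would take the expectation of the smoothness inequality \eqref{eqn:general-smoothness} over $\vals \sim \F$ and $\acs \sim \G$ drawn independently, yielding
\begin{equation*}
\sum_i \E_{\vals \sim \F,\, \acs \sim \G,\, a_i^* \sim D_i^*(\vals)}\!\left[u_i(a_i^*, \acs_{-i}; v_i)\right] \;\ge\; \lambda\, \E_{\vals \sim \F}[\opt(\vals)] - \mu\, \E_{\acs \sim \G}[\rev(\acs)].
\end{equation*}
Combining with the per-player inequality above (summed over $i$), then adding $\mu \cdot \E_{\acs \sim \G}[\rev(\acs)]$ to both sides and invoking the welfare decomposition \eqref{eq:sw2}, gives
\begin{equation*}
\mu \cdot \E_{\vals \sim \F,\,\acs \sim \sigma(\vals)}[SW(\acs;\vals)] \;\ge\; \sum_i \E\!\left[u_i(\acs; v_i)\right] + \mu\, \E[\rev(\acs)] \;\ge\; \lambda\, \E_{\vals}[\opt(\vals)],
\end{equation*}
where we used $\mu \ge 1$ to absorb the coefficient on the utility terms. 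Dividing by $\mu$ yields the claimed bound of $\lambda/\mu$.

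The main obstacle is the one already confronted in Theorem \ref{thm:sim-fpa-bne}: the smoothness deviation $D_i^*(\vals)$ is defined in terms of the \emph{true} valuation profile, which bidder $i$ does not know, so one must justify the surrogate-sampling trick that substitutes an independent draw $\valsmi' \sim \F_{-i}$ for the unknown $\valsmi$. This substitution is valid precisely because (i) under independence, the distribution of $\valsmi'$ equals the distribution of $\valsmi$ (even conditional on $v_i$), and (ii) the equilibrium action profile $\acs_{-i} \sim \G_{-i}$ is itself independent of $v_i$, allowing us to identify the right-hand side of the BNE inequality with an expectation over $\vals \sim \F$ and $\acs \sim \G$ taken independently, which is exactly the form in which the smoothness inequality is naturally integrated.
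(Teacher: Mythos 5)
Your proposal is correct and follows essentially the same route as the paper, which simply observes that ``the proof of Theorem~\ref{thm:sim-fpa-bne} generalizes to all smooth auctions'': the surrogate-sampling deviation, the factorization $\G = \G_1 \times \cdots \times \G_n$ under independence, and the integration of the smoothness inequality over $\vals \sim \F$ and $\acs \sim \G$ are exactly the paper's ingredients. The one place where you go slightly beyond what the paper spells out is the final step handling $\mu > 1$, where you write $\mu \cdot \E[SW] \geq \sum_i \E[u_i] + \mu \E[\rev]$; this uses not only $\mu \geq 1$ but also the non-negativity of equilibrium utilities $\sum_i \E[u_i] \geq 0$, which holds in the paper's auctions because each bidder has a zero-utility opt-out action but is worth stating explicitly in the general framework.
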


Results like Theorems~\ref{thm:smooth-Nash}
and~\ref{thm:extension-incomplete} are sometimes called ``extension
theorems,'' meaning that they extend an approximate efficiency
guarantee from a restricted class of equilibria (like
complete-information pure Nash equilibria) to a more general class
(like Bayes-Nash equilibria).  They free the analyst to focus on
proving inequalities of the form~\eqref{eqn:general-smoothness},
without concern for randomness in valuations or in actions.
Proofs that establish inequalities of the
form~\eqref{eqn:general-smoothness}, for a suitable $\lambda$ and
$\mu$, are sometimes called ``smoothness proofs.''


\subsection{Another Example: All-Pay Auctions}

To portray the generality of the smoothness approach, \etedit{we
  discuss how it applies to all-pay auctions (Example \ref{ex:all-pay}).}
\etdelete{we give another application for the case when a single item
  is being auctioned via an all-pay auction. All-pay auctions arise
  naturally in contest settings where players have sunk costs that
  they have to pay irrespective of whether they end up winning a prize
  or not. There is large literature in economics
  \cite{Amann1996,Baye1996,Gneezy2006,Siegel2014}, and recently in
  computer science \cite{Dipalantino2009,Chawla2012} analyzing
  properties of such types of auctions. Similar to the first price
  auction, the asymmetric independent private values model has been
  largely under-studied due to the difficulty of characterizing its
  equilibria in a closed form.
%
Consider a setting where $n$ players bid for a single item. Each
player has a value $v_i$ for the item drawn from a distribution
$\F_i$. Each player submits a bid $b_i$. The highest bidder wins the
item and all players pay their bid irrespective of whether they won
the item or not.}
\etedit{The extension theorems are especially interesting in this case,
  as all-pay auctions do not have pure Nash equilibria.  Despite this,
our results allow the analyst to think only about a property
  of pure strategies, and then conclude an approximate efficiency
  guarantee for Bayes-Nash equilibria.}

We show that the all-pay auction is $(1/2,1)$-smooth.
Thus for every valuation profile $\vals$,  we require
a deviation $b^*_i$ for each player (possibly randomized) such that
for every bid profile $\bids$,
\begin{equation}
\sum_{i\in [n]} \E_{b_i^*}\left[u_i(b_i^*,\bidsmi;v_i)\right] \geq \frac{1}{2} \opt(\vals) - \rev(\bids).
\end{equation}
Without loss of generality, assume that player $1$ has the highest
valuation in the profile $\vals$.  Our deviating bids are:
$b_1^*$ is chosen uniformly at random from $[0,v_1]$, and $b^*_i = 0$
for all $i > 1$.
Now fix a bid profile $\bids$.
Player $1$ wins the item after deviating to $b_1^*$ whenever this bid is
above the highest bid in $\bids_{-1}$, in which case she gets a value of
$v_1$.
Her expected payment is exactly $v_1/2$ (recall it's an all-pay
auction). Thus her expected utility after deviating can
be bounded below as follows:
\begin{align*}
\E_{b_1^*}\left[u_i(b_1^*,\bids_{-1};v_1)\right] =~& v_1\cdot
                                                     \Pr\left[b_1^*>\max_{j
                                                     > 1}b_j\right] - \frac{v_1}{2}
\geq v_1 \cdot \frac{v_1-\max_{j}b_j}{v_1} - \frac{v_1}{2} = \frac{v_1}{2} - \max_{j} b_j.
\end{align*}
The utility of every other player from the deviation $b_i^*=0$ is
non-negative.  Summing all these lower bounds and observing that
$\opt(\vals)= v_1$ and $\rev(\bids) \geq \max_{j} b_j$ verifies the
inequality~\eqref{eqn:general-smoothness} with $\lambda =
\tfrac{1}{2}$ and $\mu = 1$.

Combining this smoothness proof with
Theorem \ref{thm:extension-incomplete}, we conclude that every Bayes-Nash equilibrium of an asymmetric all-pay auction \teedit{with players with independent types} achieves expected welfare at least half of the expected optimal welfare.

\subsection{Correlated Valuations}

The guarantee in Theorem~\ref{thm:extension-correlated} is for
independently drawn player valuations --- what if valuations are
correlated?
Unfortunately, the guarantee no longer holds:
there exist auctions that satisfy
Definition~\ref{def:smooth-auction} for constant $\lambda$ and $\mu$,
but which have unbounded inefficiency with correlated valuations as
the number of players and
items in the market grows
\cite{Bhawalkar2011,Feldman2013}.

On the positive side,
Theorem \ref{thm:single-item-corr-fpa} shows
that the Bayes-Nash equilibria of single-item first-price auctions are
approximately efficient even when players' valuations are
correlated.
The crucial property that enables this result is that the single-item
first-price auction satisfies a smoothness-type inequality where the
deviating action depends only on the player's own valuation (and not
on others' valuations).
This type of
smoothness property was defined by \citeA{Lucier2011}, which they called
\emph{semi-smoothness}. Since the only difference with Definition
\ref{def:smooth-auction} is the independence of the deviating action
from others' valuations, we
refer to this property as
\emph{smoothness with private deviations}.

\begin{defn}[Smooth Auction with Private
  Deviations]\label{def:semi-smooth-auction}
For parameters $\lambda \ge 0$ and $\mu \ge 1$,
an auction is
  {\em $(\lambda,\mu)$-smooth with private deviations} if for every
  valuation profile $\vals
  \in \V$ there exist action distributions
  $D^*_1(\val_1),\ldots,D^*_n(\val_n)$ over $\A_1,\ldots,\A_n$
such that, for every action
  profile $\acs$,
\begin{equation}
\sum_i \E_{a_i^*\sim D_i^*(\val_i)}\left[u_i(a_i^*,\acs_{-i}; v_i)\right] \geq \lambda \opt(\vals) - \mu \rev(\acs).
\end{equation}
\end{defn}

The requirement in Definition~\ref{def:semi-smooth-auction} is
considerably stronger than that in Definition
\ref{def:smooth-auction}, but the reward is approximate efficiency
guarantees with correlated valuations.
\begin{theorem}[Extension to Correlated
  Valuations]\label{thm:extension-correlated}
If an auction is $(\lambda,\mu)$-smooth,
then for every joint distribution $\F$ over players' valuations,
every Bayes-Nash equilibrium of the
auction has expected welfare at least $\frac{\lambda}{\mu} \cdot
\E_{\vals \sim \F}[\opt(\vals)]$.
\end{theorem}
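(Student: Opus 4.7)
The plan is to mimic the proof of Theorem~\ref{thm:extension-incomplete} but to exploit the critical property that in Definition~\ref{def:semi-smooth-auction} each deviation distribution $D_i^*(v_i)$ depends only on player~$i$'s own valuation. This exact property is what made the proof of Theorem~\ref{thm:single-item-corr-fpa} work in Section~\ref{sec:fpa}, and we essentially want to abstract that argument into the general smoothness framework.

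First I would fix a joint distribution $\F$ over $\V$ (possibly correlated) and a Bayes-Nash equilibrium $\sigma$. For each player~$i$ and each type $v_i$ in the support of the $i$-th marginal, I would consider the deviation that bids according to $D_i^*(v_i)$. The point is that $D_i^*(v_i)$ is a function purely of player~$i$'s own information, so it is a legitimate deviation in the Bayesian game --- no knowledge of $\valsmi$ or of the correlation structure is needed. Applying the Bayes-Nash condition conditioned on $v_i$, we get
\begin{equation*}
\E_{\valsmi \sim \F_{-i} \mid v_i}\!\left[\E_{\acs \sim \sigma(\vals)}[u_i(\acs;v_i)]\right] \ge \E_{\valsmi \sim \F_{-i} \mid v_i}\!\left[\E_{a_i^* \sim D_i^*(v_i),\, \acs_{-i} \sim \sigma_{-i}(\valsmi)}[u_i(a_i^*,\acs_{-i};v_i)]\right].
\end{equation*}
Taking expectation over $v_i$ (drawn from its marginal) removes the conditioning; summing over $i$ gives a bound on $\E_\vals[\sum_i u_i(\sigma(\vals);v_i)]$ in terms of the expected total deviating utility.

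Next I would invoke the private-deviation smoothness inequality \emph{pointwise} at every realization of $(\vals, \acs)$: since $D_1^*(v_1),\ldots,D_n^*(v_n)$ is precisely the product of private deviations witnessing smoothness for the profile $\vals$, we have
\begin{equation*}
\sum_i \E_{a_i^* \sim D_i^*(v_i)}[u_i(a_i^*,\acs_{-i};v_i)] \ge \lambda\,\opt(\vals) - \mu\,\rev(\acs).
\end{equation*}
Taking expectation with respect to the \emph{joint} law of $(\vals,\acs)$ induced by $\F$ and $\sigma$ (no decoupling required, unlike in the independent case), and chaining with the previous display, yields
\begin{equation*}
\sum_i \E_{\vals,\acs}[u_i(\sigma(\vals);v_i)] \ge \lambda\,\E_\vals[\opt(\vals)] - \mu\,\E_{\vals,\acs}[\rev(\acs)].
\end{equation*}
Finally, adding $\mu\,\E[\rev(\acs)]$ to both sides and using the identity $\sum_i u_i(\acs;v_i) + \rev(\acs) = SW(\acs;\vals)$ from~\eqref{eq:sw2} bounds $\E[SW(\sigma(\vals);\vals)]$ below by $\lambda\,\E[\opt] + (\mu-1)\E[\rev]$; since $\mu \ge 1$ and revenue is nonnegative, this is at least $\tfrac{\lambda}{\mu}\E[\opt]$ (using $\mu \ge 1$ so one can divide by $\mu$ after writing $\E[SW] \ge \lambda\E[\opt] - (\mu-1)(\E[\opt]-\E[SW])$, or more cleanly, bound revenue above by $\E[SW]$ and rearrange).

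The main subtlety --- and the whole reason the statement needs \emph{private} deviations rather than plain smoothness --- is that in the correlated setting we cannot afford the ``resampling trick'' from the proof of Theorem~\ref{thm:sim-fpa-bne}. There, a player sampled a surrogate $\valsmi'$ from $\F_{-i}$ independently of her true conditional belief about $\valsmi$, and independence of the valuations made the surrogate distribution coincide with the belief. Under correlation, independent resampling yields the wrong distribution and the argument breaks down. Requiring $D_i^*$ to depend only on $v_i$ sidesteps this issue entirely, at the cost of a stronger hypothesis on the auction. That is the only conceptual obstacle; the rest is the same accounting we have already seen twice.
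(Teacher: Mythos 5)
Your proof is correct and follows the same approach the paper sketches: use each private deviation $D_i^*(v_i)$ directly as a legitimate unilateral deviation in the Bayesian game, apply the BNE condition conditionally on $v_i$, take expectations, and combine with the pointwise smoothness inequality --- no resampling needed, so independence of valuations is never used. Note one sign slip in the final accounting: after adding $\mu\,\E[\rev]$ and substituting $\sum_i u_i + \rev = SW$ you get $\E[SW] \geq \lambda\,\E[\opt] - (\mu-1)\E[\rev]$ (minus, not plus), and the conclusion then follows from $\E[\rev]\le\E[SW]$ and rearranging, exactly as your ``more cleanly'' parenthetical says; also, the paper's theorem statement itself has a typo (it says ``$(\lambda,\mu)$-smooth'' where the intended hypothesis, as you correctly inferred from the surrounding text, is ``$(\lambda,\mu)$-smooth with private deviations,'' Definition~\ref{def:semi-smooth-auction}).
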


Because of the stronger condition in
Definition~\ref{def:semi-smooth-auction}, the proof of
Theorem~\ref{thm:extension-correlated} is simpler than that of
Theorem~\ref{thm:extension-incomplete} --- one can just use
each deviation $D^*_i(\val_i)$ directly, and there is no need to
randomly sample fictitious valuations for the other players (and hence
no need for independent valuations).
The proof follows the same lines as that of
Theorem~\ref{thm:single-item-fpa}; see
\citeA{Lucier2011} or \citeA{Roughgarden2012} for the details.

\trdelete{
 \begin{example}{Generalized First Price Auction}
\vscomment{It might be interesting to add here the Ad auctions example which seems of practical importance and a nice portrayal of interesting settings where the stronger smoothness is satisfied.}
 \end{example}
}

\section{No-Regret Learning}
\label{sec:no-regret}

So far in this survey we have only analyzed auctions at equilibrium.
But how do players arrive at this
equilibrium?  In many real-world applications of auction design,
players do not participate in the auction only once and then
vanish. Typically they participate in the auction repeatedly.

The Bayes-Nash equilibrium condition implicitly assumes that
when a player arrives in a market, they have done their homework well: they have formed their
beliefs about the competition and have computed a Bayes-Nash
equilibrium of the market for these beliefs.
On arrival to the market,
they simply invoke the equilibrium strategy that
they have pre-computed for their realized value.

In many auction settings,
this
assumption that players are such diligent students is a strong one,
especially in cases where \teedit{entering players lack information about the environment, or when} the problem of computing an equilibrium is
computationally hard.  
\teedit{The expectation of such diligent preparation is even more unreasonable} in repeated
auction environments
where the stakes of each individual
auction are small, with the aggregate payoff over time being of
primary importance. A more reasonable assumption is that
players experiment in the market and try to optimize their bid
over time using their past experience as a proxy for future
rewards. For example, in Internet advertising auctions, there are a
number of different (adaptive) bidding agents available.
In general, the study of adaptive game playing is
called
\emph{the theory of learning in games}
\cite{Fudenberg1998}.

Can we bound the average efficiency of an auction when players use
adaptive game-playing algorithms? Do the equilibrium efficiency
guarantees that we provided thus far extend to such adaptive game
playing?

One attractive model of adaptive game playing is \emph{no-regret
  learning} \cite{Freund1999,Hart2000,Auer1995}, which dates back to
the very early work of \citeA{Hannan1957}. No-regret learning has a
long and distinguished history even outside of game theory, when a
single decision maker is facing a sequence of decisions among a fixed
set of actions whose rewards at each time step are chosen by an
adversary (see e.g.\ \citeA{Cesa-Bianchi2006} for an extended
survey). It is easy to see the relevance of this model to a repeated
game
environment: instead of an adversary, the reward of each action at
each iteration is affected by the actions of other players.
When the other players are hard to predict, the player might as well
treat them as adversarial.

A learning algorithm for a player satisfies the no-regret condition if,
in the limit as the number of times the game is played goes to
infinity, the average reward of the algorithm is at least as good as
the average reward of the best fixed action in hindsight (assuming the
\teedit{sequence of} actions of the other players remain
\etedit{unchanged}). 
Many simple algorithms are known to achieve this property, including the
\emph{multiplicative weight updates} algorithm
\cite{Littlestone1994,Freund1999} and the \emph{regret matching}
algorithm \cite{Hart2000}.  See \citeA{Cesa-Bianchi2006} for more
general classes of no-regret algorithms.

This section addresses whether the price-of-anarchy
guarantees of the previous sections extend to the average welfare of
repeated auctions, assuming that each player uses a no-regret learning
algorithm. We will argue that the price of anarchy bound of
$\frac{\lambda}{\mu}$ for a $(\lambda,\mu)$-smooth auction
(Theorems~\ref{thm:smooth-Nash}--\ref{thm:extension-incomplete})
directly
extends to such no-regret learning outcomes, thereby providing further
robustness \etedit{for the welfare} properties of smooth auctions.

More formally, we consider an auction with $n$ players that is
repeated for $T$ time steps. Each player $i$ has some fixed
valuation\footnote{This valuation can be thought of as being drawn at the
  beginning of time from the distribution $\F_i$. Recent work of
  \citeA{Hartline2015} extends the results of this section to the case where
  player~$i$'s valuation is drawn at each iteration from $\F_i$, rather than
  being fixed.} $v_i$ and at each iteration $t$, she chooses to submit
some action $a_i^t$ which can depend on the history of play. After
each iteration, each player observes the actions taken by the other
players.\footnote{This assumption can be relaxed,
and the theory also \teedit{extends} 
to a ``bandit'' model where
each player only observes the utility of the action taken.}

If a player~$i$ uses a no-regret learning algorithm, then
in hindsight her average regret for any alternative strategy $a_i'$
goes to zero \teedit{or becomes negative} (as $T \rightarrow \infty$).
When every player uses such an algorithm, the result is
a {\em vanishing regret} sequence.

\begin{defn}[Vanishing Regret]\label{d:vanishing}
A sequence of action profiles $\acs^1,\acs^2,\ldots,$ is a {\em vanishing regret sequence} if for every player $i$ and action $a_i'\in \A_i$,
\begin{equation}
\lim_{T\rightarrow \infty} \frac{1}{T} \sum_{t=1}^T \left(u_i(a_i',\acs_{-i}^t;v_i)-u_i(\acs^t;v_i)\right) \le 0.
\end{equation}
\end{defn}

We now argue that if an auction is $(\lambda,\mu)$-smooth, then
eventually the average welfare of every vanishing regret sequence is
at least $\frac{\lambda}{\mu}$ times the optimal welfare. The proof is
not hard and we sketch it here.
For simplicity, suppose that each player after $T$ time steps
already has zero regret (or less) for each action.  In particular,
each player~$i$ has no regret with respect to the randomized
action~$a_i^* \sim D^*_i(\vals)$ prescribed by the smoothness property (Definition~\ref{def:smooth-auction}).
This implies that
\begin{equation*}
\frac{1}{T} \sum_{t=1}^T u_i(\acs^t;v_i) \geq \frac{1}{T} \sum_{t=1}^T
\E_{a_i \sim D^*_i(\vali)} \left[u_i(a_i^*,\acs_{-i}^t;v_i)\right].
\end{equation*}
Summing this inequality over all players and invoking the
smoothness inequality~\eqref{eqn:general-smoothness}
for each action profile $\acs^t$, we can easily conclude that
$\frac{1}{T}\sum_{t=1}^T SW(\acs^t;\vals) \geq \frac{\lambda}{\mu}\opt(\vals)$.
The same reasoning straightforwardly yields
the following result for vanishing regret sequences.

\begin{theorem}[Extension to Vanishing Regret
  Sequences]\label{thm:extension-regret}
If an auction is $(\lambda,\mu)$-smooth,
then for every valuation profile~$\vals$,
every vanishing regret sequence of the
auction has expected welfare at least $\frac{\lambda}{\mu} \cdot
\opt(\vals)$ as $T\rightarrow \infty$.
\end{theorem}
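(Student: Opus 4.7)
My plan is to formalize the proof sketch given just before the theorem, working at finite $T$ with a vanishing-regret error term so that the limit passage is completely routine. The three ingredients are: (i) invoke the no-regret property of each player against the smoothness deviation $D_i^*(\vals)$; (ii) apply the smoothness inequality~\eqref{eqn:general-smoothness} pointwise at each realized action profile $\acs^t$; and (iii) combine with the quasi-linear welfare identity $SW(\acs^t;\vals) = \sum_i u_i(\acs^t;v_i) + \rev(\acs^t)$.

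First I fix the valuation profile $\vals$ (which, in this repeated-game setup, is a single fixed profile) and, for each player $i$, fix the randomized deviation $a_i^* \sim D_i^*(\vals)$ guaranteed by Definition~\ref{def:smooth-auction}. Since $D_i^*(\vals)$ depends only on $\vals$ and not on the history of play, it is a legitimate fixed benchmark for a no-regret learner. Definition~\ref{d:vanishing} gives vanishing regret against every pure action $a_i' \in \A_i$, and by linearity of expectation this extends to any fixed distribution over actions, yielding
\begin{equation*}
\frac{1}{T}\sum_{t=1}^T u_i(\acs^t;v_i) \;\ge\; \frac{1}{T}\sum_{t=1}^T \E_{a_i^* \sim D_i^*(\vals)}\!\left[u_i(a_i^*,\acs_{-i}^t;v_i)\right] - \epsilon_i(T)
\end{equation*}
for each $i$, where $\epsilon_i(T) \to 0$ (or becomes non-positive) as $T\to\infty$. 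Applying the smoothness inequality~\eqref{eqn:general-smoothness} at each $\acs^t$, averaging over $t$, and summing the $n$ no-regret bounds above yields
\begin{equation*}
\frac{1}{T}\sum_{t=1}^T \sum_{i=1}^n u_i(\acs^t;v_i) \;\ge\; \lambda\,\opt(\vals) - \mu \cdot \frac{1}{T}\sum_{t=1}^T \rev(\acs^t) - \sum_i \epsilon_i(T).
\end{equation*}

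Substituting $\sum_i u_i(\acs^t;v_i) = SW(\acs^t;\vals) - \rev(\acs^t)$ on the left and collecting the revenue terms gives
\begin{equation*}
\frac{1}{T}\sum_{t=1}^T SW(\acs^t;\vals) \;\ge\; \lambda\,\opt(\vals) - (\mu-1)\cdot \frac{1}{T}\sum_{t=1}^T \rev(\acs^t) - \sum_i \epsilon_i(T).
\end{equation*}
Since $\mu \ge 1$ and $\rev(\acs^t) \le SW(\acs^t;\vals)$ holds on average (itself a consequence of vanishing regret against bidding zero, which secures non-negative utility in all our auction examples), the revenue term on the right can be absorbed into the welfare term on the left. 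Dividing by $\mu$ and letting $T\to\infty$ produces the desired bound $\liminf_{T\to\infty}\frac{1}{T}\sum_{t=1}^T SW(\acs^t;\vals) \ge (\lambda/\mu)\,\opt(\vals)$.

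The only subtlety worth flagging is that $D_i^*(\vals)$ depends on the full valuation profile, including the other players' types. This causes no difficulty here because valuations are fixed across the repetitions, so $\vals$ is determined once and for all and a no-regret learner can benchmark against any distribution over actions indexed by $\vals$; contrast this with the Bayes-Nash extension of Theorem~\ref{thm:extension-incomplete}, whose proof additionally requires re-sampling fictitious opponent types from $\F_{-i}$ to produce a deviation usable by a bidder who does not know $\valsmi$. In short, the argument is the finite-$T$ analogue of the one-shot smoothness argument for pure Nash equilibria, with the equilibrium best-response condition replaced by the time-averaged no-regret condition.
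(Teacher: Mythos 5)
Your proposal is correct and follows essentially the same route as the paper's sketch: benchmark the no-regret condition against the smoothness deviations $D_i^*(\vals)$, apply the smoothness inequality pointwise at each $\acs^t$, sum and combine with the welfare identity. You expand the sketch usefully in two places: by carrying explicit $\epsilon_i(T)$ error terms so the $T \to \infty$ passage is routine, and by spelling out the final $\overline{\rev} \le \overline{SW}$ step needed when $\mu > 1$ (justified via no-regret against the zero bid), which the paper's one-line "we can easily conclude" glosses over.
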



\section{Composability}
\label{sec:composability}

\begin{quotation}
{\em ``Most analyses of competitive bidding situations are based on the assumption
that each auction can be treated in  isolation. This assumption is sometimes
unreasonable.''} \cite{Milgrom1982}
\end{quotation}

This section gives a general approach for analyzing the efficiency of
multiple auctions that take place simultaneously, when players have
valuations that are complex functions of the outcomes of the different
auctions.
Specifically, we prove a ``composition theorem''
stating
that, under a ``complement-free'' assumption on players' utility
functions, the simultaneous composition of smooth auctions is again
smooth.


\paragraph{Simultaneous composition.}
We consider a setting with $n$ bidders and $m$ auctions. Each
auction $j$ concerns its own set of items, its own feasible outcome
space $\Ou_j$, for
each player~$i$ its own action space $\A_{ij}$, and its own
outcome function $o_{j}(\acs_j)$.
Similarly, we denote by
$p_{ij}(o_j)$ the payment of player $i$ in auction $j$ in some outcome
$o_j$ and by $\rev_j(o_j)=\sum_{i}p_{ij}(o_j)$ the revenue of
auctioneer $j$.
For example, if each auction is a first-price single-item auction,
then each $\A_{ij}$ is just the set of possible bids (i.e.,
$\mathbb{R}^+$),  $o_j(\acs_j)$ awards
item~$j$ to the player who bid the highest for it, and $\rev_j(o_j)$
equals the highest bid on~$j$.
The {\em simultaneous composition} of the $m$ auctions is the auction
in which each player~$i$ simultaneously picks an action~$a_{ij}$ for
every auction~$j$, resulting in the outcome $\os=(o_1(\acs_1),\ldots,o_m(\acs_m))$.

\paragraph{Utilities and valuations.}
Importantly, a player's utility is now a function
$u_i(\os;v_i)$ of the
outcomes $\os=(o_1,\ldots,o_m)$ of all of the auctions,
where $v_i\in \V_i$ is the player's type.
In the most commonly studied utility model,
each player has a quasi-linear utility function
and is indifferent about what happens to the other players.
Mathematically, this translates to
\begin{equation}\label{eq:ql}
u_i(\os; v_i) = v_i(S_i(\os)) - \sum_j p_{ij}(o_j),
\end{equation}
where $S_i(\os)$ denotes the items awarded to player~$i$ in the
outcome $\os$, and $v_i$ is a {\em valuation function} that
assigns a nonnegative value $v_i(S)$ to each subset~$S$ of items that
player~$i$ might obtain.
For example, if every auction is a single-item auction, then
$S_i(\os)$ corresponds to the auctions for which~$i$ is the
winner.

No positive results are possible with arbitrary valuation functions
(see Section~\ref{sec:lower_bounds}), so to make progress we need to
impose structure on players' valuations.
The simplest type of valuation function~$v_i$ is an {\em additive}
function, where there are nonnegative ``item valuations''  $v_{i1},\ldots,v_{im}$
such that, for every bundle $S$ of items, the valuation $v_i(S)$ for
the bundle is just the sum $\sum_{j \in S} v_{ij}$ of the
corresponding item valuations.
A player with an additive valuation can reason about each item
separately.
The {\em unit-demand} valuations introduced in
Section~\ref{sec:sim-fpa} are another example.  Here, for nonnegative
item valuations $v_{i1},\ldots,v_{im}$, the corresponding unit-demand
valuation is $v_i(S) = \max_{j \in S} v_{ij}$.
Already in this case, the value of a player for some item $j$ might
depend on whether or not she wins a different item $j'$ sold in a
different auction.

Both additive and unit-demand valuations are special types of {\em
  monotone submodular} valuations, meaning that $v_i(S) \le v_i(T)$
whenever $S \sse T$ and
$
v_i(T \cup \{j\}) - v_i(T) \le v_i(S \cup \{j\}) - v_i(S)
$
for every $S \sse T$ and item~$j$ \teedit{also, referred to as having the decreasing marginal value property}.  Such functions model diminishing
returns, and have been studied extensively in machine learning,
optimization, and economics.

It is technically convenient to work with a still more general class
of valuations, which we call {\em complement-free} valuations.
Intuitively, the following definition prohibits complementarities
between different items, where winning one item is valuable only if
the bidder also wins some other item.\footnote{In the literature,
  ``complement-free'' is often equated with subadditivity.  What we
  are calling ``complement-free'' is often called ``fractionally
  subadditive'' or ``XOS'' (for ``XOR-of-singletons'') in the
  literature.  Valuations that are complement-free in the sense of
  this survey are always subadditive, but not conversely.
See~\citeA{Lehmann2001} for further details.}
Precisely, a valuation function $\vali$ is {\em complement-free}
if there exist additive valuations $\a^1_i,\ldots,\a^r_i$ such
that, for every subset $S$ of items,
\begin{equation}\label{eq:xos}
\vali(S) = \max_{\ell=1}^r \left\{ \sum_{j \in S} w^{\ell}_{ij}
\right\}.
\end{equation}
For example, if $\vali$ is a unit-demand valuation with item
values $v_{i1},\ldots,v_{im}$, then $\vali$ is the maximum of $m$
additive valuations $\a^1_i,\ldots,\a^m_i$, where $w^{\ell}_{ij}$
equals $v_{ij}$ if $\ell = j$ and 0 otherwise.
More generally, every monotone submodular valuation is complement-free.
The proof uses $m!$ additive valuations, one for each ordering $\pi$
of the $m$ items, and defines $w^{\pi}_{ij}$ as $v_i(S^{\pi}_j \cup
\{j\}) - v_i(S^{\pi}_j)$, where $S^{\pi}_j$ denotes the set of items
that precede $j$ in $\pi$.  The condition~\eqref{eq:xos} holds:
for every subset $S$ of items, $\a_i^{\pi}(S) \le v_i(S)$ for
every $\pi$ (by submodularity), and there exists a $\pi$ such that
$\a_i^{\pi}(S) = v_i(S)$ (take a $\pi$ where the items of $S$ come
first).


We can analogously define a utility function $u_i$ over outcomes $\os
\in \Ou$ as {\em additive} if it has the form
\[
u_i(\os; \val_i) = \sum_{j=1}^m u_{ij}(o_j; \val_{ij})
\]
for valuations of the form $v_i = (v_{i1},\ldots,v_{im})$, and {\em
  complement-free} if it is the maximum of additive utility functions:
\begin{equation}\label{eq:cfutils}
u_i(\os; \val_i) = \max_{\ell=1}^r
\left\{ \sum_{j=1}^m u_{ij}(o_j; \val^{\ell}_{ij}) \right\}.
\end{equation}
For example, with single-item auctions and quasi-linear
utilities~\eqref{eq:ql}, additive and complement-free valuations
induce additive and complement-free utility functions, respectively
(with $u_{ij}(o_j; \val^{\ell}_{ij})$ equal to
$\val^{\ell}_{ij}x_{ij}(o_j)-p_{ij}(o_j)$, where $x_{ij}(o_j)$
indicates whether or not~$i$ wins the $j$th item in $o_j$).
The composition theorem below only requires that players' utilities
are complement-free in the sense of~\eqref{eq:cfutils}, and does not
assume quasi-linear utilities or single-item auctions.


\paragraph{The Composition Theorem.}
The main result of this section is the following composition theorem.
\begin{theorem}[Composition Theorem]\label{thm:composition}
If players have complement-free utility functions, then the
simultaneous composition of $(\lambda,\mu)$-smooth auctions is again a
$(\lambda,\mu)$-smooth auction.
\end{theorem}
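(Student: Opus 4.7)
The plan is to combine, per-auction, the randomized deviations guaranteed by the smoothness of each constituent auction, using the complement-free decomposition of each player's utility to ``linearize'' her contribution across auctions. Fix a valuation profile $\vals\in \V$ and let $\os^*=(o_1^*,\ldots,o_m^*)$ be a welfare-maximizing outcome of the composed auction, so that $SW(\os^*;\vals)=\opt(\vals)$. Because $u_i$ is complement-free, for each player $i$ there exists a witness index $\ell^*(i)$ that attains the maximum in~\eqref{eq:cfutils} at $\os^*$, i.e., $u_i(\os^*;v_i)=\sum_{j=1}^m u_{ij}(o_j^*;v_{ij}^{\ell^*(i)})$. Define a virtual type profile for auction $j$ by $\tilde{v}_i^j := v_{ij}^{\ell^*(i)}$, and apply $(\lambda,\mu)$-smoothness of auction $j$ at the profile $\tilde{\vals}^j$: there exist randomized deviations $D^*_{ij}(\tilde{\vals}^j)$ such that, for every action profile $\acs_j$ in auction $j$,
$$\sum_i \E_{a_{ij}^*\sim D^*_{ij}(\tilde{\vals}^j)}\bigl[u_{ij}(a_{ij}^*,\acs_{-i,j};\tilde{v}_i^j)\bigr]\;\ge\;\lambda\,\opt_j(\tilde{\vals}^j)\;-\;\mu\,\rev_j(\acs_j).$$

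Next, in the composed auction, I would define player $i$'s deviation distribution $D^*_i(\vals)$ as the independent product $D^*_{i1}(\tilde{\vals}^1)\times\cdots\times D^*_{im}(\tilde{\vals}^m)$, drawing one action per constituent auction. For any action profile $\acs$ of the composed auction, the complement-free property applied at the specific witness $\ell^*(i)$ yields the pointwise bound
$$u_i(a_i^*,\acs_{-i};v_i)\;\ge\;\sum_{j=1}^m u_{ij}(a_{ij}^*,\acs_{-i,j};\tilde{v}_i^j).$$
Taking expectations over $a_i^*\sim D^*_i(\vals)$, summing across players, and invoking the per-auction smoothness inequality for each $j$, I obtain
$$\sum_i \E_{a_i^*}\bigl[u_i(a_i^*,\acs_{-i};v_i)\bigr]\;\ge\;\lambda\sum_{j=1}^m \opt_j(\tilde{\vals}^j)\;-\;\mu\sum_{j=1}^m \rev_j(\acs_j)\;=\;\lambda\sum_{j=1}^m \opt_j(\tilde{\vals}^j)\;-\;\mu\,\rev(\acs).$$

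The final step is to show $\sum_j \opt_j(\tilde{\vals}^j)\ge \opt(\vals)$. Since $o_j^*$ is a feasible outcome of auction $j$, we have $\opt_j(\tilde{\vals}^j)\ge SW_j(o_j^*;\tilde{\vals}^j)=\sum_i u_{ij}(o_j^*;\tilde{v}_i^j)+\rev_j(o_j^*)$; summing over $j$ and invoking the defining equality of the witnesses $\ell^*(i)$ gives $\sum_j \opt_j(\tilde{\vals}^j)\ge \sum_i u_i(\os^*;v_i)+\rev(\os^*)=\opt(\vals)$, which plugged into the previous display delivers $(\lambda,\mu)$-smoothness of the composed auction. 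The main conceptual obstacle is exactly this decomposition step: without the complement-free (XOS) structure, there is no single additive surrogate per player that simultaneously tightly witnesses her contribution to $\opt(\vals)$ and plugs cleanly into the smoothness inequality of each auction independently. Complement-freeness is precisely what makes the ``single-$\ell^*(i)$'' trick tight and lossless, which also explains (cf.~Section~\ref{sec:lower_bounds}) why one cannot hope to drop this assumption and permit valuations exhibiting genuine complementarities across auctions.
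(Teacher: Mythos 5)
Your proposal is correct and follows essentially the same approach as the paper: you pick the witness additive valuation $\ell^*(i)$ attaining the complement-free maximum at the optimal outcome (the paper calls this the ``proxy valuation'' $v_i^*$), apply per-auction smoothness at these virtual types, take the independent product of the per-auction deviations, and use the witness equality to relate $\sum_j \opt_j(\tilde{\vals}^j)$ back to $\opt(\vals)$. The only cosmetic difference is that the paper separates the argument into an explicit warm-up for additive utilities followed by a reduction of the general case to it, whereas you carry out both steps in one pass.
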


\begin{proof}
As a warm-up, suppose that every player~$i$ actually has an additive
utility function, with valuation~$v_i = (v_{i1},\ldots,v_{im})$.
Intuitively, in this case the composition
theorem follows just by adding up the smoothness inequalities for the
constituent auctions.

Formally,
write $\vals_j$ for the projection $(v_{1j},\ldots,v_{nj})$ of $\vals$
onto the valuation space for the $j$th auction.
To save notation, we write $u_i(\acs; v_i)$ to mean $u_i(o(\acs); v_i)$,
where $o(\acs)$ is the outcome resulting from the action profile $\acs$.
Since each auction~$j$ is $(\lambda,\mu)$-smooth (Definition~\ref{def:smooth-auction}), there are action
distributions $D^*_{1j}(\vals_j),\ldots,D^*_{nj}(\vals_j)$
over $\A_{1j},\ldots,\A_{nj}$ such that, for every action profile
$\acs_j \in \A_{1j} \times \cdots \times \A_{nj}$,
\begin{equation}\label{eq:single}
\sum_i \E_{a_{ij}^*\sim D_{ij}^*(\vals_j)}\left[u_{ij}(a_{ij}^*,\acs_{-i,j};
  v_{ij})\right] \geq
\lambda \opt_j(\vals_j) - \mu \rev_j(\acs_j),
\end{equation}
where $\opt_j(\vals_j)$ denotes the maximum social welfare (i.e., sum of
players' utilities plus seller revenue) of any outcome of auction~$j$
with valuation profile $\vals_j$.
Now define $D^*_i(\vals) = D^*_{i1}(\vals_1) \times \cdots \times
D^*_{im}(\vals_m)$.  Using the additivity of players' utilities and
summing up~\eqref{eq:single} over all~$j$ verifies the smoothness
condition~\eqref{eqn:general-smoothness} for the composition of the
auctions.

For the general case, fix a profile $\vals$ of valuations such that
all players' have complement-free utility functions.
The idea is to reduce to the additive case.
To extract additive utility functions from the given complement-free
ones, let $\os^*$ be a
social welfare-maximizing outcome of the composition of auctions for
valuations~$\vals$.  For each player~$i$, with valuation $v_i =
(v^1_{i},\ldots,v^m_{i})$ with each $u_i(\os; v_i^{\ell})$ additive,
define the {\em proxy valuation} $v^*_i$ as a valuation~$v^{\ell}_{i}$
that satisfies $u_i(\os^*; v_i^{\ell}) = u_i(\os^*; v_i)$.
With the valuation profile $\vals^*$, all players have additive
utility functions.  By the previous paragraph, we can define action
distributions~$D^*_1(\vals^*),\ldots,D^*_m(\vals^*)$ such that, for
every action profile $\acs \in \A$,
\begin{equation}\label{eq:im}
\sum_i \E_{a_{i}^*\sim D_{i}^*(\vals^*)}\left[u_{i}(a_{i}^*,\acs_{-i};
 v^*_{i})\right] \geq
\lambda \opt(\vals^*) - \mu \rev(\acs).
\end{equation}

The utilities on the left-hand side and social welfare on the
right-hand side of~\eqref{eq:im} refer to the proxy valuations
$\vals^*$, not the true valuations $\vals$.
The complement-free condition~\eqref{eq:cfutils}
and choice of~$v_i^*$ imply that
$u_i(\os^*; v_i) = u_i(\os^*; v^*_i)$ and
$u_i(\os; v_i) \ge u_i(\os; v^*_i)$ for every player~$i$ and outcome
$\os \in \Ou$.
Hence, the left-hand side of~\eqref{eq:im} is at most
$\sum_i \E_{a_{i}^*\sim D_{i}^*(\vals^*)}[u_{i}(a_{i}^*,\acs_{-i};
 v_{i})]$.
Also, since $\os^*$ is a feasible outcome, the optimal welfare with
the proxy valuations~$\vals^*$ is at least as large as
with the true valuations~$\vals$:
\begin{equation*}
\opt(\vals^*) \geq \sum_i u_i(\os^*;v_i^*)+R(\os^*) = \sum_i u_i(\os^*;
\val_i)+R(\os^*)= \opt(\vals),
\end{equation*}
and so the right-hand side of~\eqref{eq:im} is at least $\lambda \opt(\vals)
- \mu \rev(\acs)$ (for every $\acs$).
Combining~\eqref{eq:im} with these two inequalities verifies the
smoothness condition for the auction composition.
\end{proof}

\paragraph{A Promise Fulfilled.}
In the Introduction, we mentioned that the following result follows
easily from the machinery developed in this survey.
\begin{theorem}\label{t:sm}
Every Bayes-Nash equilibrium of the simultaneous first-price auction
game with bidders with independent submodular valuations
and quasi-linear utilities achieves expected social welfare at least
$1-1/e$ times the expected optimal welfare.
\end{theorem}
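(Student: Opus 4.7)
The plan is to chain together the three main tools developed so far in the survey: the smoothness inequality for the single-item first-price auction, the composition theorem, and the extension theorem to incomplete information. Conceptually, the proof amounts to verifying that each ingredient is available and then assembling them in the right order.

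First, I would establish that the single-item first-price auction is $(1-1/e, 1)$-smooth. The proof of Theorem~\ref{thm:single-item-fpa} already does the essential work: it produces, for each player $i$ with valuation $v_i$, a distribution $D_i^*(v_i)$ over deviating bids such that $\sum_i \E_{b_i^* \sim D_i^*(v_i)}[u_i(b_i^*,\bidsmi; v_i)] \geq \lambda \cdot \opt(\vals) - \rev(\bids)$ for every bid profile $\bids$. Using the deterministic deviation $b_i^* = v_i/2$ gives $\lambda = 1/2$; the randomized deviation supported on $[0,(1-1/e)v_i]$ mentioned in the footnote of that proof upgrades this to $\lambda = 1-1/e$, with $\mu = 1$ throughout. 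This is exactly the $(1-1/e,1)$-smoothness condition of Definition~\ref{def:smooth-auction}.

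Next, I would observe that the quasi-linear utility induced by a monotone submodular valuation is complement-free in the sense of~\eqref{eq:cfutils}. The excerpt already proves this: every monotone submodular valuation $v_i$ admits the XOS representation $v_i(S) = \max_{\pi} \sum_{j \in S} w^{\pi}_{ij}$ where $\pi$ ranges over item orderings and $w^{\pi}_{ij}$ is the marginal contribution of $j$ when added to its predecessors in $\pi$. Because payments in the simultaneous composition separate additively across the single-item auctions and utilities are quasi-linear, the induced utility functions $u_i(\os; v_i)$ are themselves maxima of additive per-auction utility functions, i.e., complement-free. This unlocks the hypothesis of the composition theorem.

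Now I would apply Theorem~\ref{thm:composition}: the simultaneous composition of $m$ copies of the single-item first-price auction, with complement-free player utilities, is itself $(1-1/e, 1)$-smooth. Finally I would feed this into the extension theorem to incomplete information (Theorem~\ref{thm:extension-incomplete}) with the independent valuation distributions $\F_1,\ldots,\F_n$ over the submodular valuation space, concluding that every Bayes-Nash equilibrium has expected social welfare at least $\tfrac{\lambda}{\mu} = 1-1/e$ times $\E_\vals[\opt(\vals)]$, which is the claim. No single step is a genuine obstacle given the machinery in place; the only mildly delicate point is being careful that the $(1-1/e)$ smoothness constant from the single-item analysis transfers unchanged through composition (it does, because the composition theorem preserves both $\lambda$ and $\mu$) and then unchanged through the extension to incomplete information (it does, for the same reason).
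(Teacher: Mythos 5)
Your proposal is correct and matches the paper's proof step for step: establish $(1-1/e,1)$-smoothness of the single-item first-price auction, note that submodular valuations are complement-free and hence yield complement-free quasi-linear utilities, apply the composition theorem (Theorem~\ref{thm:composition}) to conclude the simultaneous auction is $(1-1/e,1)$-smooth, and finish with the extension theorem to incomplete information (Theorem~\ref{thm:extension-incomplete}). This is precisely the chain the paper uses, so no further comparison is needed.
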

To review, Theorem~\ref{t:sm} follows immediately from chaining
together the following tools: the fact that the first-price
single-item auction is $(1-\tfrac{1}{e},1)$-smooth
(Theorem~\ref{thm:single-item-fpa}); the fact that submodular
functions are complement-free and hence induce complement-free
utilities for players with quasi-linear utility functions; the
composition theorem, which implies that simultaneous
first-price auctions are $(1-\tfrac{1}{e},1)$-smooth for such players
(Theorem~\ref{thm:composition}); and the extension theorem stating
that every Bayes-Nash equilibrium of a $(\lambda,\mu)$-smooth
auction with independent player valuations has expected social welfare
at least $\tfrac{\lambda}{\mu}$ times the maximum possible
(Theorem~\ref{thm:extension-incomplete}).

Of course, this same machinery applies far more generally ---
whenever an auction can be expressed as the simultaneous composition
of smooth auctions and players have complement-free utility functions
over the outcomes of these auctions.

\paragraph{Relationship to Theorem~\ref{thm:sim-fpa-bne}.}
To see that the proof of Theorem~\ref{thm:composition} generalizes that of
Theorem~\ref{thm:sim-fpa-bne}, suppose that each of the
$m$ auctions is a first-price single-item auction and that each player
has a unit-demand valuation.
Recall that such a valuation~$v_i=(v_{i1},\ldots,v_{im})$ can be
written as the maximum of $m$ additive valuations
$\a^1_i,\ldots,\a^m_i$, where $w^{\ell}_{ij}$ equals $v_{ij}$ if $\ell
= j$ and 0 otherwise.
The proof of Theorem~\ref{thm:composition} defines proxy additive
valuations $\vals^*$ by
setting each $\val_i^*$ to $\a^{j^*(i)}_i$, where $j^*(i)$ is the item
$i$ receives in a welfare-maximizing allocation (or to the all-zero
function, if $j^*(i)$ is not defined).  The deviation $D^*_i(\vals^*)$
of player~$i$ is then defined by deviating independently in each of
single-item auctions~$j$, as if its valuation for that item is
$w^{j^*(i)}_{ij}$.  The deviation used to prove smoothness of a
first-price single-item auction is to bid half of one's valuation
(Theorem~\ref{thm:single-item-fpa}), so player~$i$'s deviation here is
to bid $v_{ij^*(i)}/2$ on item $j^*(i)$ and~0 on every other item.
These are exactly the deviations used in the proof of
Theorem~\ref{thm:sim-fpa-bne}.

\paragraph{From simpler to more complex valuations.}
The heart of the proof of Theorem \ref{thm:composition} extends a
smoothness inequality from additive player utility functions to
complement-free utility functions.  The same argument shows more
generally that whenever a mechanism is smooth for some class of
utilities and valuations, it is equally smooth when the utility
function of each player is a maximum over such utilities and
valuations.
\citeA{Feldman2015} and \citeA{Lucier2015} give applications of this
generalization.

\trdelete{
Thus we simply need to show that for any monotone, submodular valuation $v_i$ and for any allocated set $S$, we can write:
\begin{equation}
\etedit{v_i(S) = \max_{\ell \in \El_i} \sum_{j\in S} w_{ij}^{\ell}}
\end{equation}
\etcomment{I really disliked the notation that was here, so propose to change it. Old notation is in the tex file commented out.}
i.e. that any submodular valuation can written as a maximum over a set of additive valuations. The latter property is known in the literature as the fact that submodular valuations are a subset of XOS valuations, the latter being all valuations that can be expressed as a maximization over additive valuations (see e.g. \cite{Lehmann2001}).
}
\trdelete{
The proof of this fact is easy and neat and we present it here for completeness: We will create an proxy additive valuation $w_i^S$ for each set $S$. Then consider the items in $S$ sequentially and in some arbitrary order and set $w_{ij}^S$ to be the marginal value of acquiring item $j$ when the player already has all items of $S$ that were ordered before $j$. More formally, if $S_{<j}$ are all the items of $S$ ordered before $j$, then $w_{ij}^S = v_i(\{j\}\cup S_{<j}) - v_i(S_{<j})$. Moreover, $w_{ij}^S=0$ for any $j\notin S$. \etedit{By construction, $v_i(S) = \sum_{j\in S} w_{ij}^S$, and it doesn't depend on the order.}
}

\trdelete{
 \etedit{Now we need to show that for any other set $S'$ we have the inequality} $v_i(S') \geq \sum_{j\in S'}w_{ij}^{S}$. Since $w_{ij}^S=0$ for $j\notin S$ and since valuations are monotone, it suffices to show the latter property for any $S'\subseteq S$. If one considers the items in $S'$ respecting the same order as the one used for constructing $w_i^S$, then by submodularity and since $S'_{<j}\subseteq S_{<j}$, the marginal value when adding item $j$ to $S'_{<j}$ is at least as large as when adding it to $S_{<j}$, which is equal to $w_{ij}^S$. Since $v_i(S')$ is equal to the sum of these marginal values, we get: $v_i(S') \geq \sum_{j\in S'} w_{ij}^S$. These two facts allows us to write for any set $S'$: $v_i(S') = \max_{S} \sum_{j\in S'} w_{ij}^{S}$, which is what we wanted.
\end{example}
}


\section{Impossibility Results}
\label{sec:lower_bounds}

Can we improve over Theorem~\ref{t:sm}, either in the approximation
guarantee or in the generality of players' valuations?  If not with
simultaneous first-price auctions, then what about with some other
``simple'' auction format?  This section outlines a general technique
for ruling out good price-of-anarchy bounds for simple auctions.

Consider a setting with $n$ bidders and $m$ items.  We now impose no
restrictions on the valuation~$v_i$ of each bidder~$i$, other than
monotonicity (meaning $v_i(S) \le v_i(T)$ whenever $S \subseteq T$) and
normalization (meaning $v_i(\emptyset)=0$).  Such general valuations
permit complementarities between items, where one item is valuable
only in the presence of other items (e.g., the left and right shoes
of a pair).  An extreme example is a {\em single-minded}
valuation~$v_i$, which is defined by a subset $S$ of items and
nonnegative number $w$ as
\[
v_i(T) = 
\left\{
\begin{array}{cl}
w
&
\text{if $T \supseteq S$;}\\
0
&
\text{otherwise.}
\end{array}
\right.
\]

Can we extend the guarantee in Theorem~\ref{t:sm} for simultaneous
first-price auctions to general valuations, or at least to
single-minded bidders?
\citeA{Hassidim2011} provided a negative answer,
even for the complete-information case.

\begin{theorem}[\citeA{Hassidim2011}]\label{t:hassidim}
With general bidder valuations, simultaneous first-price auctions can
have mixed Nash equilibria with expected welfare arbitrarily smaller
than the maximum possible.
\end{theorem}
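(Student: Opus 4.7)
The plan is to construct, for any desired $\alpha > 0$, a complete-information instance with general (complementary) valuations and a mixed Nash equilibrium whose expected welfare is below $\alpha$ times the optimum. The underlying phenomenon is the \emph{exposure problem}: when a bidder's value is concentrated on a bundle of items, winning only a proper subset is worthless but still requires payment, so bidding aggressively enough to secure the bundle can be dominated by bidding nothing at all.

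First, I would specify the instance: a collection of items together with (i) one or more \emph{bundle bidders} who are single-minded on large bundles with a common high value $V$, and (ii) a collection of \emph{single-item bidders}, each single-minded on one item with small value (say $1$). The welfare-maximizing allocation hands an appropriate large bundle to a bundle bidder, pushing $\opt$ toward $V$, while any allocation that leaves the bundle bidder empty-handed achieves welfare at most the modest total value of the single-item bidders.

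Second, I would exhibit a candidate mixed equilibrium in which every bundle bidder bids $0$ on every item, while each single-item bidder randomizes its bid on its own item according to a cumulative distribution $F$ with every point of the support yielding the same utility (the standard indifference construction for single-item mixed equilibria). Along this equilibrium the bundle bidders win nothing and contribute no welfare, while each single-item bidder wins its item (facing only $0$-bids from the bundle bidders); the resulting expected welfare is only a small fraction of $V$.

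Third, and this is the main obstacle, I would verify that no bundle bidder can profitably deviate. A bundle bidder considering a deviation to a bid vector $(b_1,\ldots,b_m)$ on the $m$ items of her target bundle obtains expected utility
\[
V \cdot \prod_{i=1}^{m} F(b_i) \;-\; \sum_{i=1}^{m} b_i\,F(b_i),
\]
since winning the bundle requires winning every item (probabilities multiply across items) while payment is incurred on each item individually won (costs add). One must design $F$, the bundle sizes, and the value $V$ in concert so that this quantity is nonpositive for every deviation while keeping the welfare ratio arbitrarily small. The key lever is the multiplicative decay of $\prod_i F(b_i)$ as $m$ grows: even modest randomization of the single-item bidders drives the bundle-winning probability to zero exponentially fast in $m$, so for any affordable per-item expenditure the bundle bidder's expected gain can be made negligible, while $\opt$ remains of order $V$. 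Calibrating these parameters carefully (as in the construction of \citeA{Hassidim2011}) yields an instance in which the price of anarchy is arbitrarily close to zero, establishing the theorem.
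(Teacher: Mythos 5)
Your high-level plan --- an exposure-problem instance with single-minded bundle bidders against per-item bidders, and a mixed equilibrium in which the bundle bidder essentially never wins the whole bundle --- is the right idea, and it matches the paper's one-line description of the Hassidim et al.\ construction (``one single-minded bidder and one unit-demand bidder per item''). But the candidate equilibrium you write down is not, as stated, a Nash equilibrium. You have the bundle bidders bid $0$ deterministically while each single-item bidder randomizes on her own item according to an atomless CDF $F$. For that randomization to be consistent with equilibrium, the single-item bidder must be indifferent across her support, and that indifference can only come from competition on her item. If the bundle bidders bid $0$ with probability one and she is the only other bidder on her item, she wins with probability one at any positive bid, so an arbitrarily small bid $\varepsilon$ strictly dominates every higher bid; there is no nondegenerate $F$ she is willing to play. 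Consequently your third step --- the deviation bound $V\prod_i F(b_i) - \sum_i b_i F(b_i)$, which is otherwise a correct accounting of the exposure problem --- is built on an $F$ that cannot arise at equilibrium in the profile you propose.

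To repair the argument you must supply a source of competition that sustains the per-item mixing. The most natural fix is to have the bundle bidder randomize as well, so that the mixing is a genuine fixed point: each small bidder mixes because she faces a random bid from the bundle bidder, and the bundle bidder is content to mix over low bids precisely because the exponentially small probability $\prod_i F(b_i)$ of assembling the whole bundle (together with the sure payment for partial wins) makes aggressive bidding unprofitable. An alternative is to introduce a second competitor per item, as in the paper's unit-demand example of a bad mixed equilibrium, to generate the per-item indifference, and then argue separately that the bundle bidder cannot profitably deviate against the induced price distributions. Either way, the missing piece is an equilibrium profile in which \emph{every} player --- including whoever creates the uncertainty that makes $F$ well defined --- is genuinely indifferent across her support; once you have that, your calibration of $V$, $m$, and $F$ to drive the welfare ratio to zero is the right finish.
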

For example, equilibria of simultaneous first-price auctions need not
obtain even 1\% of the maximum-possible welfare when there are
complementarities between many items.  The examples in the proof of
Theorem~\ref{t:hassidim} make use of one single-minded bidder and one
unit-demand bidder per item.

Can we overcome the negative result in Theorem~\ref{t:hassidim} by
designing a different auction?  
The rule of thumb in practice is that
simple auctions can perform poorly with complementarities between
items (see e.g.~\citeA{milgrombook}).  One reason for this is the
``exposure problem,'' where a bidder that has value only for a bundle
of items runs the risk of acquiring only a strict subset of her
desired bundle, at a significant price.
But how would we prove a rigorous version of this rule of thumb?  
For a fixed auction format, like simultaneous first-price auctions, the
obvious way to rule out good price-of-anarchy bounds is via an
explicit example (as in Theorem~\ref{t:hassidim}).  How could we rule
out good bounds for all simple auctions simultaneously?

The key idea is to proceed in two steps.  The first step is to rule out
good approximation algorithms for the computational problem of
allocating the items to maximize the social welfare.
The second step is to show that there is a simple auction
with good equilibria only if there is a good approximation
algorithm for the welfare-maximization problem.

The first step is implemented in~\citeA{DNS05}.  The formal statement is
about the {\em communication complexity} of the welfare-maximization
problem~\cite{KN96,fttcs}.  Imagine a setup where each player~$i$
initially knows only
her own valuation, and the players cooperate to (approximately)
maximize the social welfare by exchanging information (in the form of
bits) about their valuations.  A {\em communication protocol}
specifies how information is exchanged --- who tells what to whom, as
a function of the history-so-far.  The {\em cost} of a protocol is the
maximum number of bits exchanged over all possibilities for the
players' valuations.  The communication complexity of a problem is the
minimum cost of a communication protocol for it.
\begin{theorem}[\citeA{DNS05}]\label{t:dns}
For every constant $\alpha > 0$, every communication protocol that
achieves an $\alpha$-approximation of the maximum social welfare for
general player valuations has cost exponential in the number of
items~$m$.
\end{theorem}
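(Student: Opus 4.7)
The plan is to establish Theorem~\ref{t:dns} by reduction from a canonical communication problem with an exponential lower bound, such as multi-party set disjointness on an exponentially large universe, or equivalently via a direct fooling-set argument. The goal is to exhibit a family of welfare-maximization instances whose maximum welfares differ enough that any $\alpha$-approximation protocol must produce many distinct transcripts, hence must use at least $2^{\Omega(m)}$ bits of communication.

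For the construction, I would use single-minded bidders whose desired bundles are drawn from a family $\mathcal{S} \subseteq 2^{[m]}$. By the probabilistic method, one can find $\mathcal{S}$ of cardinality $2^{\Omega(m)}$ in which every set has size $\Theta(m)$ and any two distinct sets have symmetric difference $\Omega(m)$. I would then build instances in which each of $n$ bidders is single-minded on some $S_i \in \mathcal{S}$ for a common value $w$. The maximum welfare is approximately $n w$ when the $S_i$ admit a near-disjoint arrangement, but drops to a small fraction thereof when the bundles interlock, and this sharp dichotomy is exactly what any $\alpha$-approximation must detect.

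To derive the lower bound, observe that a deterministic protocol of cost $c$ produces at most $2^c$ distinct transcripts and hence at most $2^c$ distinct outputs. Since the $|\mathcal{S}|^n = 2^{\Omega(nm)}$ possible bundle assignments yield optima with genuinely different welfares, a fooling-set argument forces $c \ge 2^{\Omega(m)}$: any two assignments mapped to the same transcript must receive the same output, contradicting the welfare gap between them. Alternatively, one may directly reduce from $k$-party set disjointness on universe size $|\mathcal{S}|$, whose known $2^{\Omega(m)}$ communication lower bound transfers to the welfare-approximation task. Randomized protocols can be handled by appealing to the randomized versions of these hard problems, or by a Yao-style distributional argument over a hard input distribution.

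The main obstacle is engineering the family $\mathcal{S}$ and the bidder profiles so that the multiplicative welfare gap between ``compatible'' and ``interlocking'' bundle configurations stays a constant factor independent of $m$, for every fixed $\alpha > 0$. The symmetric-difference property of $\mathcal{S}$ must be quantitatively strong enough that swapping one bidder's bundle for another element of $\mathcal{S}$ always changes the feasible maximum welfare by a non-negligible amount. Fine-tuning the parameters --- bundle sizes, $|\mathcal{S}|$, $n$, and the valuation magnitudes --- so as to preserve both the welfare gap and the exponential counting bound is the heart of the proof.
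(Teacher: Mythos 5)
Your approach has a fatal structural flaw: you propose using single-minded bidders, but single-minded valuations have a succinct description (a single desired bundle $S \subseteq [m]$ plus a value $w$, i.e.\ $O(m + \log w)$ bits). With such bidders, the trivial protocol in which every player announces her bundle and value uses only $O(n(m+\log w))$ bits, after which the optimal welfare is determined --- the fact that computing it is NP-hard is irrelevant, since communication complexity does not charge for local computation. So no construction with single-minded bidders can yield an exponential communication lower bound. The actual reduction in \citeA{DNS05} requires each player's valuation to encode an exponentially large amount of private information; concretely, each player's valuation is defined by a subset $A_i$ of an exponentially large family of partitions of $[m]$ (so $v_i$ is a characteristic-function-like object over $2^{\Omega(m)}$ possibilities), and the ``all $A_i$ share an element'' vs.\ ``all $A_i$ pairwise disjoint'' gap in multiparty Disjointness is translated into a welfare gap.

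Your counting argument also does not reach the claimed bound even setting aside the single-minded issue. You count $|\mathcal{S}|^n = 2^{\Theta(nm)}$ inputs and compare against $2^c$ transcripts; this only forces $c = \Omega(nm)$, a polynomial bound, not $c = 2^{\Omega(m)}$. To get exponential communication from a fooling-set argument you would need a fooling set of size $2^{2^{\Omega(m)}}$, or equivalently to reduce from a problem (like multiparty Disjointness on a universe of size $2^{\Theta(m)}$) whose known lower bound is itself exponential in $m$. Your alternative paragraph gestures at the correct reduction, but the bidder model must be changed so that each player holds a subset of the exponentially large universe rather than a single bundle; with that fix, and with the ``cross-intersecting partitions'' construction from \citeA{DNS05} to enforce the welfare gap, the argument can be made to work. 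For reference, the paper itself does not spell out a proof --- it cites \citeA{DNS05} and notes only that the result follows by reduction from a version of Disjointness.
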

Theorem~\ref{t:dns} is proved via a reduction from a canonical hard
communication problem (a version of the ``Disjointness'' problem);
further details are outside the scope of this survey.

The second step is to extend the lower bound in Theorem~\ref{t:dns}
from communication protocols to equilibria of ``simple'' auctions.
For the following result, by ``simple'' we mean that the number of
bids submitted by each player is subexponential in the number of
items~$m$.  For example, simultaneous single-item auctions require
only a linear number of bids per player.  A ``direct-revelation''
mechanism, where each player submits her type, requires $2^m$ bids
per player (one
for each subset of items) and thus does not qualify as simple.
\begin{theorem}[\citeA{Roughgarden2014}]\label{t:lb}
If every $\alpha$-approximate communication protocol for the
welfare-maximization problem has cost exponential in the number of
items~$m$, 
then no family of simple auctions guarantees equilibrium welfare at
least $\alpha$ of the maximum possible.
\end{theorem}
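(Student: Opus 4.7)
The plan is to prove Theorem~\ref{t:lb} by contrapositive: a family of simple auctions whose every equilibrium has expected welfare at least $\alpha \cdot \opt(\vals)$ yields a communication protocol of subexponential cost for $\alpha$-approximate welfare maximization, contradicting the hypothesis. The reduction is essentially to simulate equilibrium play by direct broadcast of equilibrium actions.

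Concretely, let $F$ be a distribution over valuation profiles that witnesses the assumed communication lower bound (it is implicit in the phrasing of the hypothesis that the bound holds in a distributional sense against randomized protocols, as is standard for Disjointness-based lower bounds like that of Theorem~\ref{t:dns}). Suppose for contradiction that the simple auction has a Bayes-Nash equilibrium $\sigma$ with respect to the prior $F$ whose expected welfare is at least $\alpha \cdot \E_{\vals \sim F}[\opt(\vals)]$. First I would invoke existence of such a $\sigma$ via a standard fixed-point argument for finite Bayesian games. Then I would turn $\sigma$ into a communication protocol: each player $i$, knowing $v_i$ and the common prior $F$ (and hence $\sigma$, computed by unbounded local computation), independently samples an action $a_i \sim \sigma_i(v_i)$ and broadcasts its encoding. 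Since the auction is ``simple,'' each action space $\A_i$ has size at most $2^{\mathrm{poly}(m)}$, so each broadcast uses $\mathrm{poly}(m)$ bits and the total cost is $\mathrm{poly}(n,m)$. Every party can then compute $o(\acs)$ and its social welfare, which is at least $\alpha \cdot \E[\opt(\vals)]$ in expectation.

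Finally, I would derandomize this protocol by fixing the shared random tape to its best realization (Yao's minimax principle), yielding a deterministic protocol of the same subexponential cost with expected welfare at least $\alpha \cdot \E[\opt(\vals)]$ on $F$. This contradicts the assumed distributional form of the communication lower bound, completing the proof.

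The main obstacle is matching the informational structure of Bayes-Nash equilibria to that of communication protocols. The happy coincidence is that, because each $\sigma_i$ depends only on $v_i$, the simulation is exactly a one-shot simultaneous broadcast, so the players are not required to coordinate adaptively during the protocol. A secondary subtlety is that ``simplicity'' is stated in terms of number of actions (bids) rather than bits per message, but since any equilibrium action lies in $\A_i$, it can be encoded in $\lceil \log_2 |\A_i| \rceil = \mathrm{poly}(m)$ bits with no loss. One should also verify that the relevant lower bound of Theorem~\ref{t:dns} is robust to randomization and distributional averaging, which it is for the standard Disjointness-based constructions; this is the step where the precise formulation of the communication-complexity hypothesis in the theorem statement matters.
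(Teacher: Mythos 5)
The proposal takes a genuinely different route from the paper's and contains a gap that appears hard to repair. The paper's sketch following the theorem says the proof ``shows how to compute an equilibrium allocation of a simple auction without using too much communication,'' and it emphasizes that the theorem applies to (approximate) mixed Nash equilibria in the \emph{complete-information} setting. In a complete-information game the equilibrium strategies depend on the full valuation profile $\vals$, which no single player knows, so the players cannot each compute a common equilibrium $\sigma$ locally and simply broadcast a sample from it. The equilibrium must instead be reached \emph{via} the communication itself, which is precisely why the paper remarks that the result is only known for $\epsilon$-approximate equilibria: finding an equilibrium from bounded communication, for example by simulating no-regret dynamics or by guessing and verifying a small-support approximate Nash equilibrium, naturally yields an approximate one. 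Your claim to handle exact Bayes-Nash equilibria should already have been a warning sign that your argument is not the intended one.

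Your Bayesian variant sidesteps the coordination problem but introduces a different, fatal mismatch. Simulating a Bayes-Nash equilibrium with respect to a prior $F$ produces a protocol with good \emph{expected} welfare over $F$, which is strictly weaker than being an $\alpha$-approximate protocol, namely one achieving welfare at least $\alpha\cdot\opt(\vals)$ on each input. The hypothesis of the theorem (and of Theorem~\ref{t:dns}) is a worst-case statement, and the distributional form obtainable via Yao's minimax principle bounds the \emph{error probability} of each fixed low-cost protocol under a hard distribution, not its expected approximation ratio; a protocol that overperforms on many inputs while failing badly on many others can have expected welfare ratio at least $\alpha$ under $F$ without violating such a lower bound. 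Fixing the best random tape is a valid averaging argument but does not bridge this gap. The missing idea is a low-communication procedure that, for every fixed $\vals$, reaches (and reveals the allocation of) an $\epsilon$-approximate equilibrium of the complete-information game induced by $\vals$; given such a procedure, the assumed price-of-anarchy guarantee yields a worst-case $\alpha$-approximate protocol, and the contrapositive follows.
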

The proof of Theorem~\ref{t:lb} effectively shows how to compute an
equilibrium allocation of a simple auction without using too much
communication.  In the presence of a good price-of-anarchy bound,
this constitutes a low-cost communication protocol with a good
approximation guarantee.  If the latter cannot exist, then neither can
the former.

Technically, Theorem~\ref{t:lb} (and Corollary~\ref{cor:lb} below) are
only known to hold for
$\epsilon$-approximate equilibria --- meaning every player mixes only
over strategies with expected utility within $\epsilon$ of a best
response --- where $\epsilon > 0$ can be taken arbitrarily small.  
Theorem~\ref{t:lb} applies even to (approximate) mixed Nash equilibria
in the complete-information setting.

Combining Theorems~\ref{t:dns} and~\ref{t:lb} makes precise the
empirical rule of thumb that simple auctions cannot guarantee high
welfare outcomes when there are complementarities between items.
\begin{corollary}\label{cor:lb}
With general bidder valuations, no family
of simple mechanisms guarantees equilibrium welfare at least a
constant fraction of the maximum possible.
\end{corollary}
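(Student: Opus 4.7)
The plan is to chain together the two preceding theorems, observing that Theorem~\ref{t:dns} provides exactly the hypothesis required by Theorem~\ref{t:lb} for every positive constant approximation factor. Concretely, fix any constant $\alpha > 0$. Theorem~\ref{t:dns} asserts that every communication protocol achieving an $\alpha$-approximation for the welfare-maximization problem with general valuations must have cost exponential in the number of items~$m$. This is precisely the antecedent of Theorem~\ref{t:lb}, whose consequent then states that no family of simple auctions can guarantee equilibrium welfare at least an $\alpha$-fraction of the maximum possible. Since $\alpha$ was an arbitrary positive constant, ruling out a constant-factor bound follows immediately.

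A careful write-up would make two small points explicit. First, ``simple'' is quantified the same way in both Theorem~\ref{t:lb} and in the corollary: a family of auctions is simple if the number of actions per player is subexponential in~$m$, so invoking Theorem~\ref{t:lb} for this family gives exactly the statement we want. Second, as flagged in the remark following Theorem~\ref{t:lb}, the known version of Theorem~\ref{t:lb} applies to $\epsilon$-approximate mixed Nash equilibria rather than exact equilibria; consequently the corollary should be read (and, if desired, stated) in the same $\epsilon$-approximate sense, with $\epsilon > 0$ arbitrarily small.

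There is essentially no obstacle to overcome here, because all the technical content sits inside Theorems~\ref{t:dns} and~\ref{t:lb}: the communication lower bound via reduction from Disjointness is handled by~\citeA{DNS05}, and the simulation of simple auctions by low-cost protocols is handled by~\citeA{Roughgarden2014}. The only thing this proof does is verify that the quantifier orders line up---namely, that ``every constant $\alpha$'' in Theorem~\ref{t:dns} matches ``for this $\alpha$'' in Theorem~\ref{t:lb}---and then take the contrapositive conclusion for arbitrary~$\alpha > 0$ to rule out every constant-factor equilibrium guarantee simultaneously.
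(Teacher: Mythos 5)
Your proposal is correct and follows the same route as the paper: the corollary is obtained precisely by feeding the communication lower bound of Theorem~\ref{t:dns} into Theorem~\ref{t:lb} for every constant~$\alpha$, with the $\epsilon$-approximate-equilibrium caveat applying exactly as you note. Nothing further is needed.
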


There are analogous results for restricted classes of valuations.
For example, when players' valuations are subadditive (meaning $v_i(S
\cup T) \le v_i(S) + v_i(T)$ for all $S,T$), combining another result
in~\citeA{DNS05} with Theorem~\ref{t:lb} shows that no family of
simple auctions guarantees equilibrium welfare more than 50\% of the
maximum possible.  Since simultaneous first-price auctions do guarantee
at least 50\% of the maximum-possible welfare at equilibrium
\cite{Feldman2013}, they are optimal simple auctions for subadditive
player valuations in a precise sense.  It is an open question to
determine the best equilibrium welfare guarantee achievable by simple
auctions for submodular player valuations (see Section~\ref{sec:open}).


\section{Further Topics and a Guide to the Literature}
\label{sec:other-topics}

This section surveys the many related topics not covered by this
survey, with pointers for further reading.

\paragraph{Second-price-type auctions.} Throughout the survey we
focused on first-price auctions, with a cameo from all-pay auctions.
Some auctions used in practice have a different payment
scheme. Examples include the generalized second price auction used in
online ad auctions, and the uniform-price auction
used in financial institutions (e.g., to sell treasury bonds).
In these auctions, a player pays
the minimum bid she would need to make to continue to win the same
item(s).

The smoothness framework extends to such ``second-price-type''
auctions as well,
with small technical modifications to the smoothness
definition.
The primary change needed in Definition~\ref{def:smooth-auction} is
that, in~\eqref{eqn:general-smoothness}, the revenue of the bid
profile $\bids$ is replaced by the sum of the winning bids in $\bids$.
To obtain an approximate welfare guarantee, one needs to assume
that players do not bid above their valuations at
equilibrium.
This assumption implies that the sum of the winning bids
is bounded above by the
current welfare at equilibrium, and the resulting price-or-anarchy
bound is $\lambda/(1+\mu)$ (instead of $\lambda/\mu$).
Example applications
include price-of-anarchy bounds for
simultaneous second-price auctions
\cite{Christodoulou2008,Bhawalkar2011}, the generalized second-price
auction \cite{PaesLeme2010}, and uniform-price auctions
\cite{Markakis2012}.
\citeA{Roughgarden2012} and \citeA{Syrgkanis2013}
explain how to interpret the analyses in these papers as smoothness proofs.

One major technical issue in second-price-type auctions is the
no-overbidding assumption. Even in a single-item second-price auction,
where bidding truthfully is a weakly dominant strategy
and leads to an efficient allocation, there are very inefficient equilibria
if people overbid.\footnote{Think of two players, with player $a$
having valuation $\epsilon$ and player $b$ having valuation $1$. Player $a$
bidding $1$ and the player $b$ bidding $0$ is an equilibrium
of the auction with welfare only $\epsilon$ times the maximum possible.}
Thus the no-overbidding assumption is essential for equilibrium
efficiency guarantees.
In some cases, such as the
generalized second-price auction and the uniform-price auction,
the no-overbidding assumption can be justified by proving that
overbidding is a weakly dominated strategy.
In more complex settings, such as simultaneous second-price
auctions, overbidding may be unnatural but is not
always weakly dominated.
See
\citeA{Christodoulou2008,Bhawalkar2011,Roughgarden2012,Syrgkanis2013,Feldman2013}
for further discussion.

\paragraph{Further applications of the smoothness framework.} There are several
auction formats and mechanism design settings where the smoothness
framework has been employed to characterize the price of anarchy, or
where in retrospect existing price-of-anarchy analyses can be cast as a
smoothness proof. \teedit{We have seen} 
\emph{first-price
auctions} 
\cite{Hassidim2011,Syrgkanis2013} and \emph{all-pay auctions}
\cite{Syrgkanis2013,Christodoulou2015b}.
Other auction formats include: variants of \emph{uniform-price
auctions} that are used widely in financial
markets \cite{Markakis2012,Markakis2013}, combinatorial auctions where
the allocation of items is decided by a \emph{greedy algorithm}
\cite{Lucier2010}, \emph{position auctions} such as the generalized
second-price auction and generalizations of it
\cite{Caragiannis2014,PaesLeme2010,Syrgkanis2013}, the
\emph{proportional mechanism for bandwidth allocation}
\cite{Johari2004,Syrgkanis2013,Caragiannis2014b,Christodoulou2015},
the \emph{Walrasian mechanism} for combinatorial markets,
\cite{Babaioff2014},
{\em relax-and-round mechanisms}
\cite{Duetting2015}, and auctions for \emph{renewable energy markets}
\cite{Kesselheim2015}.

The price-of-anarchy bounds in all of these papers can be
interpreted as showing that the auction is smooth
(Definition~\ref{def:smooth-auction}) or, in some cases,
even smooth with private deviations
(Definition~\ref{def:semi-smooth-auction}).
As a rule of thumb,
when the auction allows players to express their entire private
valuation through the action space, then the auction is likely
to satisfy smoothness with private deviations
(e.g., the generalized second-price auction, relax-and-round mechanisms,
and the Walrasian mechanism).
If the action space is
restricted and only allows the player to bid some proxy restricted
valuation, then the auction tends to only satisfy the weaker
smoothness condition that extends only to the independent private
values setting (e.g., simultaneous single-item auctions).
The key intuition is that in the latter auction formats, most
price-of-anarchy proofs
require some version of the argument in the proof of
Theorem~\ref{thm:composition}
that extends a smoothness equality from
some restricted class of valuations
to some more general class of valuations.
However, this extension only
establishes the weaker smoothness condition for general valuations, even if
the auction is smooth with private deviations for the restricted
class.

\paragraph{Price-of-anarchy bounds via non-smooth techniques.}
Not all of the known price-of-anarchy bounds for auctions are
smoothness proofs.
One nice example is due to \citeA{Feldman2013}, who analyze
simultaneous first- and second-price item auctions
when players' valuations are subadditive (meaning $v_i(S \cup T) \le
v_i(S) + v_i(T)$ for all subsets $S,T$ of items).  Subadditive
valuations are strictly more general than the completment-free
valuations studied in Section~\ref{sec:composability}.
\citeA{Feldman2013} prove that the price of anarchy is constant in
this case, while the
smoothness framework is only known to
give a bound that degrades logarithmically with the number of items
\cite{Bhawalkar2011,Hassidim2011,Syrgkanis2013}.
\citeA{Feldman2013} deviate from the smoothness framework by using a
different set of deviating actions for each Bayes-Nash equilibrium.
\citeA{Duetting2013} propose a
generalization of the smoothness framework that includes even the
analysis of \citeA{Feldman2013} as a special case.

A new approach to bounding the price of anarchy
was recently proposed by \citeA{Kulkarni2015}.
For an arbitrary game, they formulate a convex program that has a
strong connection
with the welfare-maximization problem associated with the
game.
They prove that every vanishing regret sequence of action profiles
(Definition~\ref{d:vanishing})
of the game
can be associated with a solution to the Fenchel dual of
this convex program, with the average welfare of the sequence being close
to the value of the dual at this solution.
Using duality, one can argue that the average welfare of every such
sequence is close to the optimal welfare.
The primary applications in \citeA{Kulkarni2015} concern routing,
scheduling, and location games.
Interesting open questions include whether or not this framework can
prove better bounds than the smoothness framework for natural auction
games, and whether or not it extends to incomplete-information
settings.

Another line of results that fall outside of the smoothness framework
are
those that prove stronger bounds for restricted subclasses of
equilibria.
For example,
\citeA{Bikhchandani1999} proves that pure Nash
equilibria of simultaneous first-price auctions are fully efficient
(Theorem~\ref{thm:sim-fpa-pne}).  Since mixed-strategy Nash equilibria
need not be fully efficient (Example~\ref{ex:badmne}), this result
cannot be established by a smoothness proof.
For another example, \citeA{Bhawalkar2011} show that, assuming
subadditive player valuations and no overbidding, every pure Nash
equilibrium of simultaneous second-price auctions has welfare at least
50\% of the maximum possible.  Again, this guarantee does not hold for
mixed Nash equilibria and hence cannot be established via a smoothness
proof.
Further examples are provided by
\citeA{Christodoulou2015,Christodoulou2015b}, who prove
better bounds for mixed Nash equilibria of complete-information
first-price and all-pay auctions than are known for Bayes-Nash
equilibria in the more general incomplete-information case.

\paragraph{Sequential auctions.}
In all of the auctions considered in this survey,
actions are chosen simultaneously by the players and the allocation
and payments are decided in one shot.
Many auctions in practice have a
sequential component to them.
The simplest-possible theoretical model
is that of sequential item auctions, where items are sold one-by-one
in some predefined order
via single-item auctions.
This model has a long history in economics, starting from the work of
\citeA{Milgrom1982,Milgrom1999}. Sequentiality leads to a host of new
complications in price-of-anarchy analyses.
The biggest problem is
that when a player deviates at some stage of the game, the deviation
can cause a ripple effect in subsequent stages, changing for instance the
prices of subsequent items. To bypass this complication,
\citeA{PaesLeme2012} and \citeA{Syrgkanis2012a} propose a bluffing
deviation, where
a player pretends to play as in equilibrium of some random type, until
the right moment arrives when she deviates to acquire some item. Such a
deviation analysis, where the deviation involves simulating some
current equilibrium behavior, extends to the incomplete-information
setting in a black-box manner
\cite{Syrgkanis2013}. Technically, one can show that most of the
smoothness theory extends even if the deviations $a_i^*$ can depend on
the current action $a_i$ from which the player is deviating from. Such
deviations unlock the ability to analyze sequential auctions from a
price-of-anarchy point of view.

\paragraph{Budget constraints.} Most of price-of-anarchy analyses
of auctions
assume that players have quasi-linear preferences, meaning that
a player's utility
is her valuation for the items received less the payment made.
This implicitly assumes that players are capable of paying an
arbitrarily large amount.
In many practical auction settings,
players also have budget constraints.
The simplest way to model a budget is to define the utility of a
player as $-\infty$ if her payment exceeds her budget.
Do the efficiency guarantees for smooth auctions extend to the setting
where players' have finite budgets?

High prices can prevent
a player with a small budget and a high valuation from obtaining the
allocation she would get in a welfare-maximizing outcome.
To address this issue, a sequence of papers have used an alternative
benchmark to measure the efficiency of auctions in the presence of
budgets.
This benchmark is called
optimal effective welfare \cite{Syrgkanis2013} or the
optimal liquid welfare \cite{Dobzinski2014},
and is defined as the
maximum welfare achievable after capping each player's valuation for an
allocation by her budget. \citeA{Syrgkanis2013} show that, under minor
additional assumptions, the
equilibrium welfare of a $(\lambda,\mu)$-smooth auction is at
least $\frac{\lambda}{\mu}$ times the expected maximum effective
welfare. This benchmark was further analyzed by
\citeA{Caragiannis2014b} for the proportional bandwidth allocation
mechanism.
One interesting open question is
whether one can also lower bound the effective welfare at equilibrium,
rather than just the actual welfare. \citeA{Caragiannis2014b} provide such
a result for the proportional bandwidth allocation mechanism.

\paragraph{The price of anarchy for revenue.}
All of the price-of-anarchy guarantees covered in this survey concern
the equilibrium welfare of auctions.
Another quantity of primary importance is the revenue
of the auctioneer.
It is analytically intractable to explicitly characterize
the expected revenue of most non-truthful auctions at equilibrium,
even in simple settings such as asymmetric single-item first-price
auctions (see e.g.~\citeA{Kirkegaard2014}).

Remarkably, \citeA{Hartline2014} show that in ``single-parameter''
settings, where the
private information of a player boils down to a single number,
a variant of the smoothness framework can be used to prove lower
bounds on the expected auction revenue at a Bayes-Nash equilbrium,
relative to the maximum-possible expected revenue.
They formulate a stronger
version of smoothness, asserting a smoothness-type inequality
on every player individually (rather than in
aggregate).  They show that this
stronger condition implies approximate revenue guarantees in several
settings, for example when there is no bidder with a unique valuation
distribution, and
when the auction is augmented with the appropriate reserve price.
This implies, for example, that the expected revenue at a
Bayes-Nash equilibrium of an
asymmetric first-price single-item auction with an appropriate reserve
price is at least $\frac{1}{4}$ times the maximum expected revenue.
\trdelete{An
important open direction is whether one can make similar statements in
``multi-parameter'' settings,
where a player's private information cannot be captured by a single
number (like in multi-item auctions).}

\paragraph{Algorithmic characterizations of smoothness.}
Smooth auctions have many advantages, including composition theorems,
and extension theorems for incomplete-information and no-regret
learning settings.
Which allocation algorithms and payment rules yield smooth auctions?


Several papers provide sufficient (but not necessary) algorithmic
conditions for smoothness. \citeA{Lucier2010} show that if the
allocation is based on a large class of greedy $c$-approximation
algorithms with a ``loser-independence'' property, then coupling
it with a first- or second-price payment scheme leads to a smooth
auction with price of anarchy $\Theta(c)$.
\citeA{Lucier2015} gave a more general characterization along these
lines: if an auction can be viewed as
running a greedy algorithm in
some abstract ``element space,'' subject to matroid or poly-matroid
constraints, then coupling it with a first-price payment scheme
leads to a smooth auction.
Examples of auctions that fall into this characterization are
simultaneous item auctions, uniform-price
auctions and position auctions, even in the presence of externalities
\cite{Roughgarden2012b}.
\citeA{Babaioff2014} prove
that in a combinatorial auction setting with gross substitute
valuations, maximizing welfare with respect to the reported valuations
and charging suitable payments yields a smooth auction.
\citeA{Duetting2015} show that if an
allocation algorithm is based on a technique in approximation
algorithms known as ``relax-and-round,'' then coupling it with
suitable payments yields a smooth auction.

\citeA{Duetting2015b} give a necessary and sufficient condition for
smoothness in single-parameter settings.
This characterization enables impossibility results for smooth
mechanisms that are independent of any computational concerns.
\citeA{Duetting2015b} define
the \emph{permeability} of an allocation algorithm and show
that, for every single-parameter setting, there exists a smooth
auction with approximation guarantee~$c$ if and only if
there exists an allocation
algorithm for the setting with permeability $\Theta(c)$.
Variants of permeability were used earlier as a sufficient condition
for smoothness in
\citeA{Syrgkanis2013} and
\citeA{Hartline2014}.
\citeA{Duetting2015} show how their characterization yields
impossibility results,
for example for combinatorial auctions with single-minded
bidders.
\trdelete{Extending this characterization to
multi-dimensional private information settings is a very interesting
open direction.}

\paragraph{Valuations with complementarities.} Most of the results presented in
this survey assume that the valuation function of each
player exhibits no complementarities.
The impossibility results in Section~\ref{sec:lower_bounds} show that
some assumption of this type is necessary for good price-of-anarchy
bounds.
A natural goal is to prove welfare guarantees for auctions that
degrade gracefully with the ``degree of complementarity'' of players'
valuations.
Such an analysis was done for computationally efficient
truthful mechanisms in \citeA{Abraham2012},
and for the equilibria of non-truthful auctions in
\citeA{Feldman2015}.
The results in \citeA{Feldman2015} include, for example,
an extension of the composition theorem in Section
\ref{sec:composability} (Theorem~\ref{thm:composition}) beyond
complement-free valuations.

\paragraph{The price of anarchy in large markets.}
A classic economic
intuition is that as a market grows large, the effect of each
player on the market and hence the opportunities for strategic
behavior
should diminish,
leading to more efficient outcomes.
Making this idea rigorous is non-trivial.
\citeA{Swinkels2001} showed that in a uniform-price auction,
where there is generally inefficiency in small markets,
the equilibrium welfare converges to the
optimal welfare as the number of players and units of the good go to
infinity, under a noise assumption on the arrival of players or
units. \citeA{Feldman2015b} extended these results to
combinatorial markets with arbitrary bidder valuations across
different goods and to the simultaneous uniform-price auction,
offering an adaptation of the smoothness framework that can incorporate
large market assumptions. \citeA{Cole2015} subsequently
provided similar full-efficiency, large market results for the
``Walrasian mechanism'' with gross substitute
valuations.
One benefit of using a smoothness approach to prove such large market
results is that it can prove full efficiency guarantees even when
players do not behave straightforwardly in the limit (e.g., when
players cannot fully express their valuations through their bids).


\paragraph{Complexity of computing an equilibrium.} Another line of
work addresses equilibrium computation in simple
auctions. \citeA{Cai2014} analyze the complexity of
computing a Bayes-Nash equilibrium in simultaneous second-price
auctions and shows that it is a computationally hard problem.
Such computational intractability raises questions about the
predictive power of the equilibrium concept.

When a Nash or Bayes-Nash equilibrium is hard to compute, no-regret
learning often comes to the rescue.
When the strategy space of each player is part of the input,
\teedit{the players can use no-regret algorithms (as discussed in Section~\ref{sec:no-regret}) to approximately compute learning outcomes, also called ``coarse correlated equilibrium'' (which is closely related to
Definition~\ref{d:vanishing}) in polynomial time. The }
price-of-anarchy guarantees proved using smoothness arguments apply to such equilibria (Section~\ref{sec:no-regret}).

With implicitly defined exponential-size strategy spaces,
as in simultaneous item auctions with many items, it is an
open question whether polynomial-time decentralized dynamics exist
that lead to no-regret outcomes. On the negative side,
\citeA{Daskalakis2016} show that, under
appropriate complexity assumptions, there are no
polynomial-time algorithms that guarantee no regret in the worst case.
Conceivably, one could dodge this hardness result by
designing coordinated dynamics for all of the players (to avoid the
hard instances of the learning problem).
\citeA{Daskalakis2016} complement their
impossibility result by proposing a relaxed version of no-regret
learning, called ``no-envy learning.''  They show that there exist
polynomial-time no-envy learning algorithms,
and that most
price-of-anarchy bounds proved via smoothness arguments extend to
outcome sequences generated by no-envy learners.

Motivated by equilibrium tractability,
\citeA{Devanur2015} seek combinatorial auctions that simultaneously
have good price-of-anarchy bounds and also a very small strategy
space. They propose an algorithm where
the strategy space of each player is a single number, with equilibrium
welfare at least an inverse logatihmic (in the number of items)
fraction of the maximum possible.
\citeA{Braverman2016} show that a large class of auctions cannot
improve over this logarithmic bound.

Finally, \citeA{Christodoulou2008} and \citeA{Dobzinski2015}
consider the complexity of computing a complete-information pure Nash
equilibrium of simultaneous second-price auctions.
In simultaneous first-price auctions, such equilibria correspond to
Walrasian equilibria \cite{Bikhchandani1999}, and can therefore be
computed using linear programming (whenever one exists).


\paragraph{Bayesian no-regret learning and Bayes-coarse correlated
equilibria.} Section \ref{sec:no-regret} analyzed no-regret
learning when players' valuations remain fixed over time. Recent
work of \citeA{Hartline2015} shows that the efficiency guarantees of a
smooth auction hold even when each player's valuation is drawn anew at
each iteration,
independently from some distribution $\F_i$.
The first step of the argument shows that if all players use no-regret learning
algorithms in such an environment, then the empirical distribution of
joint play converges to an analog of the
coarse correlated equilibrium for games of incomplete
information (``strategic form Bayes-coarse
correlated equilibrium'').
The second step shows that the approximate efficiency guarantees
of smooth auctions apply even to such equilibria.
\citeA{Caragiannis2014} had previously showed
such a result
only under the stronger condition of smoothness via private
deviations (Definition~\ref{def:semi-smooth-auction}).

\paragraph{Signaling and Bayes-correlated equilibria.}
This survey discussed only private value settings where each player
knows her own valuation and nothing else.
\citeA{Caragiannis2014} show that for auctions that are smooth via
private deviations, the approximate welfare guarantees hold
even if players
receive arbitrary signals about others' valuations prior to
bidding. The set of outcomes that can arise
in such a setting is closely related
to the notion of a Bayes-correlated
equilibrium
\cite{Bergemann2011,Bergemann2013}.
Specifically, this result implies that
in a $(\lambda,\mu)$-smooth auction via private
deviations, every Bayes-correlated equilibrium  where players know at
least their own valuation achieves expected welfare at least
$\frac{\lambda}{\mu}$ times the expected optimal welfare.


\section{Open Questions and Research Directions}
\label{sec:open}

We conclude with a dozen suggestions for future research.

\subsection{Stronger Price-of-Anarchy Bounds for Common Auction Formats}

This survey focused on auction formats for which the price of anarchy
is relatively well understood, but some open questions remain even for
these auctions.

\paragraph{1. The price of anarchy of first-price single-item auctions.}
What is the exact price of anarchy of asymmetric first-price
single-item auctions with independent player valuations?  The answer
is at least~$1-1/e$ (Theorem~\ref{thm:single-item-fpa}, due to
\citeA{Syrgkanis2013}) and at most~$.87$ \cite{Hartline2014}.

\paragraph{2. Better simple auctions with submodular valuations.} For
subadditive player valuations, the communication complexity lower
bounds in \citeA{Roughgarden2014} (Section~\ref{sec:lower_bounds})
imply that simultaneous first-price auctions have the best-possible
price of anarchy of any simple auction.
With the stronger assumption of submodular bidder valuations
(Section~\ref{sec:composability}), the price of anarchy of
simultaneous first-price auctions is~$1-1/e$ (Theorem~\ref{t:sm}), and
this is tight in the worst case \cite{Christodoulou2013}.
The state-of-the-art in communication complexity only implies that
there is no simple auction with price of anarchy better
than~$1-1/2e$ \cite{DV13}.  Intriguingly, there is a low-cost communication protocol
that approximates the welfare to within a factor (slightly) larger
than $1-1/e$ \cite{FV10}.  Is there an analogously good simple
auction?  What is the best price of anarchy
achievable by a simple auction with submodular player valuations?


\paragraph{3. Explicit impossibility results.}
In some non-auction settings, such as in routing and congestion games,
there is a ``generic'' construction that always produces examples with
the worst-possible price of anarchy \cite{Roughgarden2009}.
Is there a similiar generic construction for some family of auctions,
for example simultaneous first-price auctions,
perhaps generalizing the examples in \citeA{Christodoulou2015} and
\citeA{Feldman2015b}?
Impossibility results derived from communication complexity
(Section~\ref{sec:lower_bounds}) do not seem to lead to such a
construction.

\subsection{Price-of-anarchy Bounds for Other Auction Formats}

There are a number of practically relevant auction formats that have
been understudied from a price-of-anarchy perspective.

\paragraph{4. Procurement auctions.} The price-of-anarchy literature
has focused on forward auctions, where buyers with private
valuations compete for resources in an auction. In many
applications, for example the procurement of energy from power firms,
the situation is the opposite: a set of sellers with private
costs compete to provide some service to one or more buyers who are
running an auction to obtain the service as cheaply as
possible.
Simple non-truthful procurement auctions are common in practice, and
not much is known about their price of anarchy.
See \citeA{Babaioff2014b} for some initial results in this direction.

\paragraph{5. Double auctions.} Even less is known about simple double
auctions, where an intermediary runs an auction that has both buyers
and sellers with private valuations.
The classic impossibility result of \citeA{Myerson1983}
for worst-case instances suggests that assumptions are
needed for positive results. A line of work in
economics \cite{Rustichini1994,Satterthwaite2002,Cripps2006} shows
convergence to full efficiency at equilibrium of simple auctions as
the market grows large. Does the smoothness framework for large
markets proposed in \citeA{Feldman2015b} extend to large double
auctions, ideally unifying all previous such results in a single
analysis framework?

\paragraph{6. The combinatorial clock auction.}
Iterative auctions, which take place in rounds rather than in
a single shot, pose a particular challenge for price-of-anarchy analyses.
For example, the iterative combinatorial clock
auction \cite{Ausubel2006} is widely used in practice for
selling wireless spectrum. This auction contains both sequential and
simultaneous bidding elements, which complicates the analysis of
its price of anarchy.
\citeA{Bousquet2016} recently provided
a theoretical analysis of the efficiency of the auction
when all players act truthfully. Extending this approximate welfare
guaranteee to equilibria appears highly non-trivial.

\paragraph{7. Mechanisms without money.}
In many of the ``killer applications'' of mechanism design,
including kidney exchange and residency matching,
monetary transfers are prohibited.
These are examples of ``mechanism design without money''
\cite{Schummer2007,Procaccia2009}. Can we quantify the
equilibrium inefficiency of simple mechanisms
in such settings, perhaps with a smoothness-type framework?
In many of these applications, the first challenge is to identify
a well-motivated objective function for which price-of-anarchy bounds
might be possible.

\trdelete{
In such environments it is also natural to address other measures of
quality of equilibria such as fairness. Is it possible to have a
general theory of price of anarchy for fairness related metrics?
Recent positive results for truthful mechanisms \cite{Cole2013}, hint
at similar positive price of anarchy results for simple non-truthful
mechanisms.
}

\subsection{Richer Utility Models}

Most of the results in this survey concern bidders with independent
private values and with quasi-linear utility functions.  To what
extent can these assumptions be relaxed?

\paragraph{8. Restricted correlation of private values.}
Equilibria of smooth auctions can have very low welfare when players'
valuations are arbitrarily correlated
\cite{Bhawalkar2011,Feldman2013}.
Are there natural forms of valuation correlation for which
smooth auctions are still guaranteed to have near-optimal equilbiria?
For example, {\em affiliation} is a strong form of positive
correlation that has unlocked several results in economics
\cite{Milgrom1982}.
Can we bound the inefficiency of simple multi-item auctions with some
form of affiliated private valuations?

\paragraph{9. The price of anarchy with risk-averse players.}
Classical auction theory concerns risk-neutral bidders, while in
practice many bidders are risk-averse (all else being equal,
preferring low-variance outcomes to high-variance ones). Risk aversion
poses a host of
problems to the analysis of auctions, and only partial results are
known in economics on understanding equilibria in this case
(see \citeA{Maskin1984} and several follow-up works).
Can we at least bound the equilibrium efficiency of a first-price
single-item auction with risk-adverse bidders?
See \teedit{\citeA{FuHartlineHoy2013} for some results on revenue of such 
auctions and}
\citeA{Lianeas2015} for some results along these lines in
non-auction domains, such as selfish routing networks.

\subsection{Generalizing the Smoothness Framework}

The smoothness framework presented in this survey is already rather
general, but good researchers are greedy and always want more.

\paragraph{10. Beyond no-regret dynamics.} The smoothness approach
directly extends to adaptive game-playing when
the vanishing regret condition holds
(Section~\ref{sec:no-regret}).
There are a number of interesting adaptive game-playing
algorithms that are not guaranteed to achieve this condition.
%
One famous example is \emph{fictitious play}
\cite{Brown1951}, which
is only known to converge to an
equilibrium (and therefore satisfy the vanishing regret condition)
in some special
cases, such as zero-sum games \cite{Robinson1951}, potential games
\cite{Monderer1996}, and two-player games where one player has only two
strategies \cite{Berger2005}.
Does fictitious play achieve the vanishing regret
property in smooth auctions?
Is the average welfare of
fictitious play over time guaranteed to be close to the maximum
welfare in smooth auctions?
These questions are also relevant for many
other forms of adaptive game-playing;
see \citeA{Fudenberg1998} for a starting point.

\paragraph{11. Characterizations of smoothness in multi-dimensional
domains.} \citeA{Duetting2015b} provided the first algorithmic
characterization of smoothness in
single-dimensional mechanism design domains. An obvious open question
is whether or not there is an analogous
characterization for multi-dimensional
domains like multi-item auctions.
A related direction is to understand the extent to which the
approximation achievable by low-cost communication protocols
(Section~\ref{sec:lower_bounds}) characterizes the best-possible
guarantee of a simple smooth mechanism.

\trdelete{
\paragraph{Simple auctions with obvious equilibria.} Given that
Bayes-Nash equilibria can be hard to compute in simple auctions with
large strategy spaces, one can ask whether we can design simple
auctions where the equilibrium behavior is obvious and that still
maintain small price of anarchy. \cite{Devanur2015} provided auctions
where the strategy space of each player was very simple, so that
optimizing over it is easy. However, one can ask the stronger
requirement that the optimal strategy for each player is also obvious,
not just easy to compute.
}

\paragraph{12. The price of anarchy for revenue in multi-dimensional
settings.} \citeA{Hartline2014} adapted the
smoothness framework to prove bounds on the revenue of simple
auctions in single-dimensional environments. Can we obtain interesting
guarantees for the revenue of simple mechanisms for
multi-dimensional environments?
Key to the results of \citeA{Hartline2014} is the equivalence of
expected revenue and expected ``virtual welfare'' in single-parameter
settings \cite{Myerson1981}.  While this equivalence does not hold in
multi-parameter environments,
there are cases where a version of virtual welfare
well-approximates the revenue of an auction \cite{Chawla2010}.
Can we use a similar approach together with an adaptation of
smoothness to bound the revenue of other simple auctions?



\vskip 0.2in
\bibliographystyle{theapa}
\bibliography{poa_survey}

\end{document}